\documentclass[11pt,letterpaper]{article}
\usepackage[utf8]{inputenc}
\usepackage{times}
\usepackage[ruled,vlined,commentsnumbered,titlenotnumbered]{algorithm2e}
\usepackage{url}
\usepackage{fullpage}
\usepackage{pslatex} 
\usepackage{mathdots} 
\usepackage{amsmath}
\usepackage{algorithmic}
\usepackage{amssymb}
\usepackage{amsthm}
\usepackage{color}
\usepackage{array}
\usepackage{xy}
\usepackage{setspace}
\usepackage{varwidth}
\usepackage{multicol}
\usepackage{hyperref}
\usepackage{cite}
\usepackage[margin=1in,letterpaper]{geometry}
\setlength{\topmargin}{0 in}
\setlength{\textheight}{9 in}

\usepackage{accents}

\newcommand{\defn}[1]{\emph{\textbf{{#1}}}}

\usepackage{todonotes}

\newcommand{\poly}{\operatorname{poly}}
\newcommand{\polylog}{\operatorname{polylog}}
\newcommand{\E}{\mathbb{E}}

\renewcommand{\paragraph}[1]{\vspace{.5 cm} \noindent \textbf{#1} }

\usepackage{thmtools}
\usepackage{thm-restate}

\newtheoremstyle{slanted}
{3pt}
{3pt}
{\slshape}
{}
{\bfseries}
{.}
{.5em}
{}
\theoremstyle{slanted}
\newtheorem{theorem}{Theorem}
\newtheorem{lemma}[theorem]{Lemma}

\newtheorem{proposition}[theorem]{Proposition}
\newtheorem{corollary}[theorem]{Corollary}

\begin{document}

%


\date{}


\title{A Hash Table Without Hash Functions, \\and How to Get the Most Out of Your Random Bits}
\author{William Kuszmaul\footnote{MIT CSAIL, \url{kuszmaul@mit.edu}.  Funded by a Fannie and John Hertz Fellowship and an NSF GRFP Fellowship. This research was also partially sponsored by the United States Air Force Research Laboratory and the United States Air Force Artificial Intelligence Accelerator and was accomplished under Cooperative Agreement Number FA8750-19-2-1000. The views and conclusions contained in this document are those of the authors and should not be interpreted as representing the official policies, either expressed or implied, of the United States Air Force or the U.S. Government. The U.S. Government is authorized to reproduce and distribute reprints for Government purposes notwithstanding any copyright notation herein.}}
\maketitle
\thispagestyle{empty}

\begin{abstract}
This paper considers the basic question of how strong of a probabilistic guarantee can a hash table, storing $n$ $(1 + \Theta(1)) \log n$-bit key/value pairs, offer? Past work on this question has been bottlenecked by limitations of the known families of hash functions: The only hash tables to achieve failure probabilities less than $1 / 2^{\polylog n}$ require access to fully-random hash functions---if the same hash tables are implemented using the known explicit families of hash functions, their failure probabilities become $1 / \poly(n)$.

To get around these obstacles, we show how to construct a randomized data structure that has the same guarantees as a hash table, but that \emph{avoids the direct use of hash functions}. Building on this, we are able to construct a hash table using $O(n)$ random bits that achieves failure probability $1 / n^{n^{1 - \epsilon}}$ for an arbitrary positive constant $\epsilon$. 

In fact, we show that this guarantee can even be achieved by a \emph{succinct dictionary}, that is, by a dictionary that uses space within a $1 + o(1)$ factor of the information-theoretic optimum. 

Finally we also construct a succinct hash table whose probabilistic guarantees fall on a different extreme, offering a failure probability of $1 / \poly(n)$ while using only $\tilde{O}(\log n)$ random bits. This latter result matches (up to low-order terms) a guarantee previously achieved by Dietzfelbinger et al., but with increased space efficiency and with several surprising technical components.  
\end{abstract}
\vfill
\pagebreak

\newpage 
\pagenumbering{arabic}

\section{Introduction}
A \defn{dictionary} is any data structure that supports insertions, deletions, and queries on a set $S$ of up to $n$ \defn{keys}; dictionaries often also allow for a user to store a value associated with each key, which can then be retrieved during queries. Unless stated otherwise, we will assume a machine word of $\Theta(\log n)$ bits, which means that keys/values are also $O(\log n)$ bits. We will also require implicitly that a dictionary should take at most linear space (i.e., $O(n \log n)$ bits) and that a dictionary should be explicit (i.e., it can be initialized in time $O(n)$). In fact, the dictionaries in this paper have the stronger property that they can be initialized in constant time. 

Randomized dictionaries are often also referred to as \defn{hash tables}.\footnote{Hash tables are sometimes also informally defined as any solution to the dictionary problem that makes use of hash functions. We intentionally take a more open-ended perspective as to the definition of a hash table, so that we include data structures that accomplish the same goal as traditionally accomplished by hash tables, but using different means.}  A hash table is said to have \defn{failure probability} $\epsilon$ if each operation takes constant time with probability at least $1 - \epsilon$, and is said to \defn{succeed with high probability} if $\epsilon \le 1 / \poly n$.

A central open question is whether there exists a deterministic constant-time dictionary. A remarkable success in this direction is P\v{a}tra\c{s}cu and Thorup's dynamic fusion node \cite{dynamicfusion}, which builds on older work by Fredman and Willard \cite{fusionoriginal} in order to construct a deterministic constant-time dictionary for very small sets of keys---that is, sets $S$ of at most $\polylog n$ keys that are $\Theta(\log n)$ bits each. For sets of $\Theta(n)$ keys, it is widely believed that (even non-explicit) deterministic constant-time dictionaries are impossible \cite{openproblems}, but we are still very far from having lower bounds to establish this (see \cite{Sundar91,HagerupMiPa01, Ruzic08,Pagh00, dietzfelbinger1994dynamic} for other related work on this question).

In this paper, we consider a natural relaxation of this question: What is the smallest failure probability that a hash table can offer \cite{iceberg, goodrich2011fully, goodrich2012cache}? We present the first hash table to achieve a significantly sub-polynomial failure probability. And we show that such a hash table can even be made \defn{succinct}, meaning that it uses space within a $(1 + o(1))$ factor of the information-theoretic optimum. 

\paragraph{Past work on super-high-probability guarantees.} 
The study of probabilistic guarantees for hash tables has, up until now, been intimately tied to the study of hash-function families \cite{naor1999construction, luby1988construct, pagh2008uniform, pagh2003uniform, kaplan2009derandomized, dietzfelbinger2003almost, dietzfelbinger1990new, Reingold1, Reingold2, patracscu2012power, siegel2004universal, siegel1989universal}. If one has access to fully-random hash functions, then it is known \cite{goodrich2011fully, goodrich2012cache, iceberg} how to achieve substantially sub-polynomial failure probabilities. However, as observed by Goodrich, Hirschberg, Mitzenmacher, and Thaler \cite{goodrich2012cache}, the known techniques for simulating constant-time hash functions with high independence \cite{siegel2004universal, pagh2008uniform, dietzfelbinger2003almost} are \emph{themselves} randomized constructions that introduce an additional $ 1/ \poly(n)$ probability of failure. Efforts at reducing these failure probabilities \cite{goodrich2012cache} have only been able to do so at the cost of $\omega(1)$ evaluation times. 

Of the known families of constant-time hash functions, the only one that has been successfully used to obtain a hash table with sub-polynomial failure probabilities is tabulation hashing \cite{patracscu2012power}. Although the standard analyses of tabulation hashing include a $1 / \poly(n)$ failure probability, it has been noted by \cite{patracscu2012power} that, in some parameter regimes, the true failure probability is actually sub-polynomial. Indeed, one can extend the techniques of \cite{patracscu2012power} to show that tabulation hash functions are load-balancing with probability $1 - 1 / 2^{\polylog n}$, thereby allowing one to construct a hash table that has a failure probability of $1 / 2^{\polylog n}$. To the best of our knowledge, this remains the smallest failure probability to be achieved by any hash table.

\paragraph{This paper: hash tables with nearly optimal failure probabilities.} 
We introduce a simple data structure, which we call the \defn{amplified rotated trie}, that offers a failure probability of $1 / n^{n^{1 - \epsilon}}$ for an arbitrarily small positive constant $\epsilon$ of our choice. Barring a deterministic constant-time dictionary, this is the close to the strongest guarantee that one could hope for: if there were to exist a hash table with failure probability $1 / n^{\epsilon n}$, for some positive constant $\epsilon > 0$, then that would imply the existence of a (non-explicit) deterministic constant-time dictionary. Our result improves significantly over the previous state-of-the-art of $1 / 2^{\polylog n}$. 

Our second result is that, with a few small modifications, the same data structure can be used to obtain a very different guarantee. The resulting hash table, which we call the \defn{budget rotated trie}, uses $\tilde{O}(\log n)$ random bits to support constant-time operations with high probability in $n$. This guarantee, which has also been achieved using more classical hashing-based techniques in previous work by Dietzfelbinger et al.~\cite{dietzfelbinger1992polynomial}, serves as a natural dual to the one above --- rather than trying to minimize failure probability, while using up to $O(n)$ random bits, one tries to minimize random bits while maintaining a standard $1 / \poly(n)$ failure probability. 

An interesting feature of budget rotated tries is that they are able to make use of so called ``gradually-increasing-independence hash functions'' \cite{Reingold1, Reingold2}. These hash functions, introduced originally by Celis, Reingold, Segev, and Wieder \cite{Reingold1} (and subsequently made more efficient by Meka, Reingold, Rothblum, and Rothblum \cite{Reingold2}) can be used to distribute $n$ balls roughly evenly across $n$ bins using only $O(n \log \log n)$ random bits, but come with the seemingly significant drawback that they require $\Theta((\log \log n)^2)$ time to evaluate. As a consequence, past work on applying these hash functions to classical hash tables \cite{Reingold3} has incurred $\omega(1)$ time per operation. Our approach suggests that such gradually-increasing-independence may be more broadly applicable to than was previously thought, and can be used in the design of constant-time data structures.

\paragraph{Achieving succinctness.}
Finally, we turn our attention to space efficiency. There has also been a great deal of work on how to construct a \defn{succinct hash table} (see, e.g., \cite{Raman03Succinct, iceberg, supersuccinct, liu2020succinct}), that is, a hash table that stores $n$ key/values pairs from a universe $U$ in space
$$(1 + o(1)) \mathcal{B}(|U|, n)$$
bits, where $\mathcal{B}(|U|, n) = \log \binom{|U|}{n}$ is the information-theoretic lower bound on the size of any hash table.

In Section \ref{sec:succinct}, we show that the data structures in this paper can also be made succinct, in the parameter regime where keys/values are $(1 + \Theta(1))\log n$ bits. More generally, we give a black-box transformation that can be applied to any dictionary in order to obtain a succinct dictionary whose probabilistic guarantees are nearly the same as the original's. The new dictionary uses $\mathcal{B}(|U|, n) + O(n (\log n) / \log \log n)$ bits.

Interestingly, the transformation itself makes use of our (non-succinct) budget rotated trie as a critical algorithmic component. The transformation also makes use of a reduction due to Raman and Rao \cite{Raman03Succinct}, and can be seen as a constant-time and randomness-efficient version of the succinct dictionary given in \cite{Raman03Succinct} (which guaranteed only constant expected-time operations).

Applying our transformation, we obtain two data structures: we get a succinct hash table that uses $O(\log n (\log \log n)^3) = \tilde{O}(\log n)$ random bits, while supporting constant-time operations with high probability; and a succinct hash table with a failure probability of $1 / n^{n^{1 - \epsilon}}$, where $\epsilon$ is an arbitrarily small positive constant of our choice. 

\paragraph{Circumventing the hash-function bottleneck. } 
At the core of our results is a simple but powerful observation: that it is possible to construct a hash table \emph{that does not use hash functions}, and that is consequently free of the limitations that hamper known hash-function constructions. In particular, we begin our exposition by constructing a simple randomized dictionary that we call a \defn{rotated radix trie}. Like standard hash tables, the rotated radix trie uses linear space and is constant-time (with high probability). But unlike standard hash tables, which rely on randomness supplied by hash functions, the rotated radix trie uses randomness directly embedded into the data structure. The rotated radix trie then serves as the basis for both the amplified rotated trie and the budget rotated trie.

\paragraph{Outline.}
The rest of the paper proceeds as follows. Section \ref{sec:prelim} presents basic preliminaries and conventions---these are specialized for the setting of non-succinct dictionaries, and slightly different conventions are established later on in Section \ref{sec:succinct} for discussing succinct dictionaries. Section \ref{sec:rotated} presents and analyzes the rotated radix trie. Building on this, Section \ref{sec:highprob} gives a hash table that achieves failure probability $1 / n^{n^{1 - \epsilon}}$ and Section \ref{sec:lowbits} gives a high-probability hash table using $O(\log n \log \log n)$ random bits; in Appendix \ref{app:hash}, we show that the latter guarantee can also be extended to the case where machine words are $\omega(\log n)$ bits. Finally, in Section \ref{sec:succinct}, we show how to transform any linear-space hash table into a succinct hash table, while nearly preserving the randomization guarantees of the data structure.

\section{Preliminaries and Conventions (for Non-Succinct Dictionaries)}\label{sec:prelim}

We now present several preliminary definitions and conventions for discussing (non-succinct) dictionaries. These conventions are used throughout the paper, except in Section \ref{sec:succinct} where we consider succinct dictionaries. We end up using slightly different conventions when discussing non-succinct versus succinct dictionaries because, in the non-succinct setting, there are a number of standard simplifications that one can make without loss of generality (but which do not hold in the succinct setting). 

\paragraph{Keys, values, and dictionaries.}
Let $U = [\poly(n)]$ be the set of all possible $\Theta(\log n)$-bit keys, and let $V = [\poly(n)]$ be the set of all possible $\Theta(\log n)$-bit values. A \defn{dictionary} is a data structure that stores a set of keys from $U$, and that associates each key $x$ with a value $y \in V$. Dictionaries support three operations: Insert$(x, y)$ adds key $x$ to the set, if it is not already there, and sets the corresponding value to $y$; Delete$(x)$ removes $x$; and Query$(x)$ reports whether key $x$ is present, returning the corresponding value if so. 

When discussing non-succinct dictionaries, we focuses (without loss of generality) on fixed-capacity dictionaries, that is, dictionaries that are permitted to have up to $n$ keys at a time. Such dictionaries can be used to implement dynamically-resized dictionaries by simply rebuilding the dictionary (in a deamortized fashion) whenever its size changes by a constant factor. Unless stated otherwise, we shall require implicitly that dictionaries must use at most linear space (i.e., $O(n \log n)$ bits) and have $O(1)$ initialization time.

\paragraph{Standard techniques for simplifying dictionaries.}
There are several standard reductions that can be used to simplify the problem of maintaining a linear-space dictionary.

We can assume without loss of generality that the lifespan of a dictionary is only $O(n)$ operations. Indeed, longer sequences of operations can be broken into phases of size $ O (n) $, and the dictionary can be rebuilt from scratch during each phase (i.e., all of the elements are gradually moved from one instance of the dictionary to another new instance of the dictionary). The rebuild cost can be spread across the phase so that the asymptotic running times of operations are preserved.\footnote{For our purposes, rebuilds \emph{do not} sample new random bits. Once a dictionary's random bits are chosen, they are fixed forever.}

Since the lifespan of each phase is only $O(n)$ operations, we can implement deletions with the following trivial approach: simply mark elements as deleted, and defer the actual removal of those elements until the next rebuild. As a consequence, when designing the dictionary that will be used to implement each phase, we can assume without loss of generality that the only operations performed are insertions/queries. 

We will therefore assume throughout the paper that, whenever we are discussing a non-succinct dictionary, the sequence of operations being performed has length $O(n)$ and consists exclusively of inserts/queries. 

\paragraph{Randomization.}
Randomized dictionaries are given access to a stream of random bits---the dictionary can access the next $\Theta(\log n)$ bits of the stream in time $O(1)$. When analyzing a randomized dictionary, the goal is to bound the \defn{failure probability} for any given operation. We emphasize that, in this context, failure does not refer to lack of correctness, but instead to lack of timeliness. A dictionary \defn{fails} whenever an operation takes super-constant time. (Later, in Section \ref{sec:succinct}, when we consider succinct dictionaries, we will also allow for failures with respect to space consumption.)

All of our dictionaries share the property that, once a failure occurs, all of the rest of the operations (in the current phase of $O(n)$ inserts/queries) also fail. We will not bother to explicitly specify the dictionary's behavior when a failure occurs, since at that point it is okay for each of the remaining operations in the phase to take linear time.

We remark that randomized data structures are analyzed against \emph{oblivious} adversaries, meaning that the sequence of insertions/deletions/queries being performed is determined independently of the random bits that the dictionary uses. We also remark that the failure probability of a dictionary is determined on a per-operation basis. For example, if a dictionary has failure-probability $p$ and is used for $1 / p$ operations, then it is reasonable that some failures should occur.\footnote{Moreover, failures may be correlated between steps (and between phases). For example, if we are using $r$ random bits, and an adversary guesses them, then they can force failures all the time with probability $1 / 2^r$. }

\section{A Warmup Data Structure: The Rotated Trie.} \label{sec:rotated}

In this section, we present a simple randomized constant-time dictionary, called the \defn{rotated radix trie}, that serves as the basis for the data structures in later sections.  

\paragraph{The starting place: an $ n $-ary radix trie.}
The starting place for our data structure will be the classic $ n $-ary radix trie. Each internal node of the trie can be viewed as an array of size $ n $, where the $j$-th entry of the array stores either a pointer to child $j$, if such a child exists, or a null character otherwise. The leaves of the trie correspond to the keys in the data structure (and are where we store values). In general, there is a leaf with root-to-leaf path $j_1, j_2, j_3, \ldots, j_d$ if and only if the key $j_1 \circ j_2 \circ j_3 \circ \cdots \circ j_d \in [n^d] = [U]$ is present. 

What makes the $ n $-ary radix trie an interesting starting place is that the trie deterministically supports constant-time operations. What it does not support is space efficiency: there may be as many as $\Theta(n)$ internal nodes, each of which is an array of size $n$, and which collectively require space $\Omega(n^2)$ to implement.

\paragraph{Using randomness to save space: the rotated radix trie.}
We now add randomness to our data structure in a very simple way. Label the internal nodes of the trie by $1, 2, \ldots, m$ for some $m \in O(n)$, and refer to the array used to implement each internal node $i \in [m]$ as $A_i$. When the data structure is initialized, we assign to each internal node $ i $ a \defn{random rotation} $r_i$ selected uniformly at random from $\{0, 1, \ldots, n - 1\}$. The rotation $r_i$ is stored as part of the node $i$.

The purpose of $r_i$ is to apply a random cyclic rotation to the array $A_i$. That is, if a pointer would have been stored in position $j$ of $A_i$, it is now stored in position $((j + r_i) \bmod n)$ of $A_i$ instead. 

Finally, having rotated each of the arrays $A_i$ by $r_i$, we now overlay the arrays $A_1, A_2, \ldots, A_m$ on top of one another, and we store the contents of all of them in a single array $A$ of size $n$. Of course, the $j$-th position of $A$ may be responsible for storing elements from multiple $A_i$s. As long as the number of elements stored in each entry is relatively small, then this is fine: we simply implement each entry of $A$ as a dynamic fusion node \cite{dynamicfusion}, which is a deterministic constant-time linear-space dictionary capable of storing up to $\ell = \polylog n$ key/value pairs at a time. 

If, prior to collapsing the arrays into a single array $A$, the the $j$-th position of rotated array $A_i$ stored a pointer to array $A_{i'}$, then afterwards the dynamic fusion node $A[j]$ stores the key-value pair $(i, (i', r_{i'}))$. In this setting, we refer to the pair $(i', r_{i'})$ as a \defn{pointer} to $A_{i'}$, since it dictates which array $A_{i'}$ we are pointing at and where to find the entries of $A_{i'}$. Similarly, if prior to collapsing the arrays, the $j$-th position of the rotated array $A_i$ stored a pointer $p$ directly to a value (rather than to another array $A_{i'}$), then the dynamic fusion node $A[j]$ stores the key-value pair $(i, p)$.

\paragraph{Analyzing the rotated radix trie.}
To analyze the rotated radix trie, we must show that, with high probability in $n$, each entry of $A$ is responsible for storing entries from at most $\ell = \polylog n$ different $A_i$s. 

Let us first establish some conventions that will be useful throughout the rest of the paper. When discussing a radix trie, we will refer to the arrays $A_1, A_2, \ldots, A_m$ as the \defn{nodes} (or sometimes as the \defn{internal nodes}), and we will refer to the non-null entries of each $A_i$ (i.e., the entries containing pointers) as \defn{balls}. 

In total, there are $O(n)$ balls in the trie. Each ball $b$ is specified by a pair $(s, c) \in [m] \times [n]$, where $s \in [m]$ is the \defn{source node} for the ball (i.e., the node containing the ball), and $c \in [n]$ is the \defn{child index} of the ball (i.e., the index in $A_i$ where $b$ is logically stored). The effect of randomly rotating the arrays $A_i$ and then overlaying them to obtain a single array $A$ is that each ball $b = (s, c)$ gets mapped to position $\phi(s, c) := c + r_s$ in $A$. We refer to the entries of $A$ as \defn{bins}, so each ball $b$ gets mapped to bin $\phi(b)$. The dynamic fusion node for a each bin $j \in [n]$ stores the set of key-value pairs $(b, p)$ where $b$ ranges over the balls satisfying $\phi(b) = j$, and $p$ is the pointer corresponding to the ball $b$. 

For $i \in [m]$ and $j \in [n]$, let $X_{i, j}$ be the $0$-$1$ random variable indicating whether node $ i $ places a ball into bin $j$. The $X_{i, j}$s are not independent across the bins $j$, but they are independent across the nodes $i$, since each $X_{i, j}$ is a function of the random bits $r_i$. Therefore, the number $Y_j$ of balls in bin $j$, which is given by $Y_j =\sum_{i = 1}^m X_{i, j},$
is a sum of independent indicator random variables. 

Each of the $O(n)$ balls has probability $1/n$ of being in bin $j$, so $\E[Y_j] = O(1)$. Thus, by a Chernoff bound, we have that $Y_j \le \polylog n$ with high probability in $n$. The Chernoff bound actually tells us that $Y_j \le \polylog n$ with probability $1 / n^{\polylog n}$, so we have even achieved a slightly  sub-polynomial probability of failure.

\paragraph{Putting the pieces together.}
If we implement deletions as in Section \ref{sec:prelim}, then we obtain the following result:
\begin{proposition}
The rotated radix trie is a randomized linear-space dictionary that can store up to $n$ $\Theta(\log n)$-bit keys/values at a time, and that supports each operation in constant time with probability $1 - 1 / n^{\polylog n}$.
\label{prop:radixtrie}
\end{proposition}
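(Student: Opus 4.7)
The plan is to verify three things: dictionary correctness, linear space, and the per-operation failure probability $1/n^{\polylog n}$. Correctness is immediate from the construction: each pointer $(i', r_{i'})$ records the rotation of its target, so a walk from the root down to depth $d$ can be simulated by $d$ dynamic-fusion-node lookups, and since keys are $\Theta(\log n)$ bits while each level of the trie consumes $\log n$ bits, we have $d = O(1)$. Thus every operation touches only $O(1)$ bins, and its running time is dominated by the cost of looking up in those bins.

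The crux is a load bound for a fixed bin $j \in [n]$. I plan to write $Y_j = \sum_{i=1}^m X_{i,j}$ and make three observations. First, $X_{i,j} \in \{0,1\}$: two balls at distinct child indices $c \ne c'$ of $A_i$ land in distinct positions $c + r_i$ and $c' + r_i$ modulo $n$, so a single node contributes at most one ball per bin. Second, $X_{i,j}$ is a function of $r_i$ alone, and the $r_i$'s are independent across $i$, so the family $\{X_{i,j}\}_{i}$ is independent. Third, a ball at child index $c$ in node $i$ lands in bin $j$ exactly when $r_i \equiv j - c \pmod n$, so $\E[X_{i,j}] = (\text{number of balls in } A_i)/n$, and summing gives $\E[Y_j] = (\text{total balls})/n = O(1)$ since the total edge count in a trie with $O(n)$ internal nodes and at most $n$ leaves is $O(n)$. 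A standard multiplicative Chernoff bound on independent indicators with constant mean then yields $\Pr[Y_j \ge \polylog n] \le 1/n^{\polylog n}$.

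To finish, I would union-bound over the $O(1)$ bins touched by a given operation, concluding that with probability $1 - 1/n^{\polylog n}$ each touched fusion node contains at most $\ell = \polylog n$ entries and therefore answers in $O(1)$ time by the guarantees of the dynamic fusion node. Linear space follows because only $O(n)$ balls are stored across all bins---each costing $O(1)$ words---yielding $O(n \log n)$ bits total, provided fusion-node storage is packed or allocated in proportion to occupancy rather than reserved at full capacity $\ell$. There is no genuine obstacle here; the delicate point I would emphasize in the writeup is the independence of the $X_{i,j}$'s across $i$, which is what lets a direct Chernoff tail bound replace the hash-function-family machinery that typically bottlenecks such constructions.
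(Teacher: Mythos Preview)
Your proposal is correct and follows essentially the same approach as the paper: define $Y_j=\sum_i X_{i,j}$, observe that the $X_{i,j}$ are independent indicators across $i$ because each depends only on $r_i$, note $\E[Y_j]=O(1)$ since there are $O(n)$ balls each landing in bin $j$ with probability $1/n$, and apply a Chernoff bound to get $\Pr[Y_j>\polylog n]\le 1/n^{\polylog n}$. Your additions---the explicit $0$-$1$ argument via injectivity of $c\mapsto c+r_i\bmod n$, the union bound over the $O(1)$ bins on the root-to-leaf path, and the packing remark for linear space---are all sound elaborations that the paper leaves implicit.
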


It's worth taking a moment to remark on how to initialize our data structure. The random rotations $r_i$ can be initialized lazily, so that $r_i$ is generated the first time that the node $i$ is used. Additionally, we do not have to actually pay the cost of initializing any arrays, since we can use standard techniques to simulate zero-initialized arrays in constant time (see \cite{aho1974design} or Problem 9 of Section 1.6 of \cite{bentley}). Thus our rotated radix trie can be initialized in constant time.

\paragraph{Taking stock of our situation.} The rotated radix trie does not, on its own, make any significant progress on either of the problems that we care about: (1) achieving super-high probability guarantees, and (2) using a near-logarithmic number of random bits. We have achieved a \emph{slightly} sub-polynomial failure probability, but we are nowhere near our goal of $1 / n^{n^{1 - \epsilon}}$. 


What makes the rotated radix trie useful, however, is that the role of randomness in the data structure is remarkably simple. The \emph{only} sources of randomness are the rotational offsets $r_1, r_2, \ldots, r_m$. In this sense, the rotated radix trie deviates from the standard mold for how to design a constant-time dictionary. The randomness in the data structure isn't used to hash elements, but is instead used to apply random rotations to sparse arrays. 

Since the role of randomness will be important in later sections, we conclude the current section by discussing an important subtlety in how the randomness is re-purposed over time. Consider how the data structure evolves over a large period of time containing many insertions/deletions. As the shape of the trie changes, each array $A_i$ will be repurposed to represent different parts of the trie. This means that the way in which the random rotation $r_i$ interacts with the key space also changes over time, with the same $r_i$ applying to a different node in the trie (and thus a different part of the key space) at different times. The re-purposing of $r_i$s has an interesting consequence: even if two points in time $t_1$ and $t_2$ store the exact same set $S$ of key/value pairs as one-another, the shape of the rotated trie may differ considerably between the two times. 


\section{The Amplified Rotated Trie}\label{sec:highprob}

In this section, we modify the rotated radix trie to reduce its probability of failure (i.e., the probability that a given operation takes super-constant time) to $1 / n^{n^{1 - \epsilon}}$, for a positive constant $\epsilon $ of our choice. We will refer to this new data structure as the \defn{amplified rotated radix trie}. 

\paragraph{Storing overflow balls in a (non-rotated) trie.}
Whenever a ball $ b $ is inserted into a bin $ j $ that already contains $\ell = \polylog n$ other balls, the ball $b$ is regarded as an  \defn{overflow ball}. Since each bin is a dynamic fusion node with capacity $\ell$, we cannot store the overflow balls in the bins.

We instead store the overflow balls in a secondary data structure $ Q $ that is implemented as a $n^{\delta}$-ary trie, for some positive constant $\delta > 0$. 

The secondary data structure $Q$ supports inserts/queries on overflow balls in constant time. On the other hand, $Q$ is not space efficient. If there are $q$ overflow balls, then $Q$ may use as much as $q n^{\delta}$ space. To establish that our dictionary uses linear space, we must show that
\begin{equation}
\Pr[q \ge n^{1 - \delta}] \le O\left(1 / n^{n^{1 - \epsilon}}\right).
\label{eq:q}
\end{equation}

\paragraph{The problem: dependencies between balls with shared source nodes.}
Our current data structure does not yet satisfy \eqref{eq:q}, however. This is because, whenever multiple balls share the same source node, their assignments become closely linked. Suppose, for example, that the rotated trie $R$ has only $2\ell$ internal nodes, and that each internal node $i \in \{1, 2, \ldots, 2\ell\}$ contains $\Theta(n / \ell)$ balls $(i, 1), (i, 2), \ldots, (i, \Theta(n / \ell))$. With probability $1 / n^{2\ell} = 1 / 2^{\polylog n}$, each internal node $i \in \{1, 2, \ldots, 2\ell\}$ has random rotation $r_i = 0$. This results in bins $1, 2, \ldots, \Theta(n / \ell)$ each containing $2\ell$ balls---in other words, half of the balls in the system are overflow balls. This means that
$$\Pr[q \ge \Omega(n)] \ge 1 / 2^{\polylog n}.$$
That is, our failure probability using $Q$ the store overflow balls is \emph{no better} than the failure probability that we achieved in Section \ref{sec:rotated} without $Q$. 

\paragraph{Reducing the dependencies.}
What makes the above pathological example possible is that it is possible to have only a small number of internal nodes in our rotated trie. This makes it so that there are only a small number of random bits that affect the rotated trie's structure, preventing us from achieving any super-high probability guarantees. 

To fix this problem, we reduce the fanout of our rotated radix trie from $n$ to $n^{\delta}$. Now each internal node can contain at most $n^{\delta}$ balls, so there are guaranteed to be at least $n^{1 - \delta}$ internal nodes. This ensures that there are always at least $n^{1 - \delta} \log n$ random bits affecting the trie's structure.

We remark that, since the fanout of the rotated trie is now $n^{\delta}$, each ball is determined by a pair $(s, c)$ where $s \in [m]$ is a source node and $c \in [n^{\delta}]$ is a child index. Nonetheless, the mapping $\phi$ from balls to bins works exactly as before: we map ball $(s, c)$ to bin $\phi(s, c) = ((r_s + c)\bmod n)$ where $r_s \in [n]$ is selected at random. 

\paragraph{Bounding the number of overflow balls.}
Of course, there are still dependencies between the number $q_j$ of overflow balls in different bins $j \in [n]$. To handle these dependencies, we make use of a tool from probabilistic combinatorics.

Call a function $f: [0, 1)^m \rightarrow \mathbb{R}$ \defn{$L$-Lipschitz} if for every pair of inputs of the form $\vec{x} = (x_1, \ldots, x_i, \ldots, x_m)$ and $\vec{x'} = (x_1, \ldots x_i^{\prime}, \ldots, x_m)$, we have $|f(\vec{x}) - f(\vec{x'})| \le L$. McDiarmid's inequality \cite{McDiarmid89} tells us that if $f$ is $L$-Lipschitz and $X_1, X_2, \ldots, X_m \in [0, 1)$ are independent random variables, then for any $t \ge 0$,
$$\Pr[|f(X_1, \ldots, X_m) - \E[f(X_1, \ldots, X_m)]| \ge t] \le 2e^{-2t^2 / (mL^2)}.$$

To apply McDiarmid's inequality to our situation, define $f(r_1, \ldots, r_m) := q$ to be the number of overflow balls. Observe that $f$ is $n^\delta$-Lipschitz, since each $r_i$ can determine the outcome of at most $n^{\delta}$ different balls. Since $\E[q] = \frac{1}{\poly n}$, it follows by McDiarmid's inequality that
\begin{align*}
\Pr[f(r_1, \ldots, r_m) \ge n^{1 - \delta}] & \le e^{-\Omega(n^{2 - 2\delta} / (m n^{2 \delta}))} \\
& = e^{-\Omega(n^{2 - 2\delta} / n^{1 + 2\delta})} \\
& = e^{-\Omega(n^{1 - 4\delta})}. \\
\end{align*}
For any $0 < \epsilon \le 1$, we can set $\delta = \epsilon / 5$ so that 
\begin{align*}
\Pr[q \ge n^{1 - \delta}] &  \le e^{-\Omega(n^{1 - 4\delta})} \\
&\le O\left(n^{-n^{1 - \epsilon}}\right).
\end{align*}

This establishes \eqref{eq:q}. If we implement deletions as in Section \ref{sec:prelim}, then we arrive at the following theorem.

\begin{theorem}
The $n^{\epsilon / 5}$-ary amplified rotated radix trie is a randomized linear-space dictionary that can store up to $n$ $\Theta(\log n)$-bit keys/values at a time, and that supports each operation in constant time with probability $1 - O\left(1 / n^{n^{1 - \epsilon}}\right)$. 
\label{thm:high}
\end{theorem}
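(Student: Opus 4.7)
The plan is to assemble the ingredients developed earlier in the section into a clean correctness/runtime argument, with the failure event pinned down as a single bad event about the overflow structure $Q$. The key observation is that, once we bound the number $q$ of overflow balls by $n^{1-\delta}$, both the linear-space bound and the per-operation constant-time guarantee fall out simultaneously, so all of the randomized analysis is routed through controlling $q$.

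First I would verify that, conditional on $q \le n^{1-\delta}$, every Insert/Query takes constant time and the whole structure fits in $O(n \log n)$ bits. A query descends a root-to-leaf path in the $n^\delta$-ary rotated trie, which has depth $O(1/\delta) = O(1)$, performing one dynamic-fusion-node lookup per level (deterministic $O(1)$ via the P\v{a}tra\c{s}cu--Thorup construction cited in Section \ref{sec:rotated}). If the sought ball is not found in the rotated trie, a single $O(1)$-depth traversal of the $n^\delta$-ary overflow trie $Q$ finishes the query. Insertions are symmetric: upon reaching the target bin we attempt to insert into the fusion node and, if that node is already at its capacity $\ell$, route the new ball into $Q$ instead; all of this is $O(1)$ work. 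Space consumption is $O(n \log n)$ from the rotated trie (each of the $O(n)$ balls contributes $O(1)$ to some fusion node) plus $O(q \cdot n^\delta) = O(n)$ from $Q$.

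Next I would bound $\Pr[q > n^{1-\delta}]$ using McDiarmid's inequality exactly as set up in the excerpt. The independent random variables are the rotations $r_1, \ldots, r_m$ with $m = O(n^{1-\delta})$; the statistic is $f(r_1, \ldots, r_m) = q$; and each $r_i$ can alter at most $n^\delta$ ball assignments (the balls whose source node is $i$), making $f$ be $n^\delta$-Lipschitz. Since $\E[q]$ is negligible (each ball lands in an overfull bin only with probability $1/n^{\polylog n}$ by the Chernoff bound from Section \ref{sec:rotated}, so $\E[q] = o(1)$), McDiarmid yields $\Pr[q \ge n^{1-\delta}] \le \exp(-\Omega(n^{2-2\delta}/(m n^{2\delta}))) = \exp(-\Omega(n^{1-4\delta}))$. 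Setting $\delta = \epsilon/5$ drives this below $O(n^{-n^{1-\epsilon}})$, which is precisely \eqref{eq:q}; combined with the conventions of Section \ref{sec:prelim} (phases of length $O(n)$ and deletions handled by deferred removal) this gives the theorem.

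The main obstacle, already flagged in the section, is getting around the dependencies between balls sharing a source node: the pathological example with $2\ell$ dense nodes shows that the bin-level analysis of Section \ref{sec:rotated} is essentially tight, so a naive Chernoff bound on the bin-loads $Y_j$ cannot reach super-polynomial concentration. The fix is to force the number of independent rotational choices up to $\Omega(n^{1-\delta})$ by capping the fanout at $n^\delta$, and then to concentrate the global statistic $q$ rather than per-bin quantities; once McDiarmid is applied with Lipschitz constant $n^\delta$, the desired super-polynomial tail follows and the theorem is in hand.
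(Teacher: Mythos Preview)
Your proposal follows the paper's argument essentially verbatim: route overflow balls to a deterministic $n^{\delta}$-ary trie $Q$, cap the fanout at $n^{\delta}$ to force many independent rotations, and then apply McDiarmid's inequality to the global overflow count $q$ with Lipschitz constant $n^{\delta}$, finishing by setting $\delta=\epsilon/5$.

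One slip to fix: you write ``$m = O(n^{1-\delta})$,'' but that is the \emph{lower} bound the paper establishes on the number of internal nodes (each node holds at most $n^{\delta}$ balls, so there must be at least $n^{1-\delta}$ of them). The \emph{upper} bound used in McDiarmid's denominator is $m = O(n)$, and indeed $m$ can be $\Theta(n)$ for an $n^{\delta}$-ary trie of constant depth (e.g., when many leaves have long non-branching ancestor chains). Your arithmetic already silently uses $m = O(n)$---plugging $m = O(n^{1-\delta})$ into $n^{2-2\delta}/(m\,n^{2\delta})$ would give $n^{1-3\delta}$, not the $n^{1-4\delta}$ you wrote---so the final bound and the choice $\delta = \epsilon/5$ are correct; just correct the stated value of $m$.
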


We remark that there is a strong sense in which the amplified rotated radix trie is nearly optimal. In particular, for any constant $\epsilon > 0$, if there were to exist a randomized linear-space dictionary with failure probability of $1 / n^{\epsilon n}$, that would imply the existence of a deterministic (though non-explicit) linear-space constant-time dictionary. 

\begin{lemma}
 Let $\epsilon > 0$ be any positive constant and assume a machine word of size $w = \Theta(\log n)$ bits. Suppose there exists randomized linear-space dictionary that stores up to $n$ $\Theta(\log n)$-bit keys/values at a time and has failure probability $1 / n^{\epsilon n}$. Then there also exists a deterministic (not-necessarily explicit) dictionary with the same guarantees.   
 \label{lem:lowerhigh}
\end{lemma}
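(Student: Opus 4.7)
The plan is to apply the probabilistic method via a union bound: show that some fixing $r$ of the random bits used by the hypothesized randomized dictionary makes it succeed on \emph{every} valid input sequence. Hardcoding $r$ into the data structure then gives a deterministic (non-explicit) dictionary with matching linear-space and constant-time guarantees.

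To set up the union bound I would first count the relevant input sequences. By the reductions of Section~\ref{sec:prelim} (oblivious adversary, lifespan $L = O(n)$, insertions/queries only), each operation is specified by $O(\log n)$ bits---a type, a key in $[\poly n]$, and (for inserts) a value in $[\poly n]$---giving at most $2^{O(n \log n)} = n^{cn}$ distinct sequences for some fixed constant $c$. For each fixed sequence, a union bound over its $L$ operations combined with the per-operation failure probability $1/n^{\epsilon n}$ bounds the probability the dictionary fails somewhere in the sequence by $O(n) \cdot n^{-\epsilon n}$. A further union bound over sequences gives that the probability $r$ fails on \emph{some} input is at most $O(n) \cdot n^{(c-\epsilon)n}$, which is $o(1)$ provided $\epsilon > c$.

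To handle an arbitrary $\epsilon > 0$ (possibly much smaller than $c$), I would instead invoke the hypothesis at a larger parameter $N := Kn$, where $K$ is a constant chosen so that $\epsilon K > c$, and use the $N$-capacity randomized dictionary to store only $n \le N$ keys. Its per-operation failure probability becomes at most $N^{-\epsilon N} \le n^{-\epsilon K n}$, while the count of length-$O(n)$ input sequences over keys in $[\poly(N)] = [\poly(n)]$ remains $n^{cn}$. The union-bound product is then $O(n) \cdot n^{(c - \epsilon K) n} = o(1)$, so some $r$ works for every sequence; because $K$ is a constant, the resulting structure still uses $O(N \log N) = O(n \log n)$ space and constant time per operation. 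The main obstacle is precisely this arbitrary-$\epsilon$ issue---a naive union bound only beats the counting factor $n^{cn}$ when $\epsilon$ exceeds the implicit constant $c$---and the inflation trick $N = Kn$ is what lets us trade a larger (but still constant) factor in space for a correspondingly smaller effective failure probability.
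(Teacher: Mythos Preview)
Your proposal is correct and is essentially the paper's own proof: both count the $n^{O(n)}$ length-$O(n)$ operation sequences, union-bound against the per-operation failure probability, and handle an arbitrarily small $\epsilon$ by inflating the dictionary's capacity to $n' = Kn$ for a sufficiently large constant $K$. The one point the paper makes explicit that you leave implicit is that the hardcoded random string fits in linear space---since the constant-time dictionary touches only $O(n)$ words over its $O(n)$-operation lifespan, it can read at most $O(nw) = O(n\log n)$ random bits---so you may want to add a sentence to that effect.
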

\begin{proof}
To distinguish the randomized dictionary from the deterministic dictionary that we are constructing, we will refer to the former as a hash table and the latter as a dictionary.

As noted in Section \ref{sec:prelim}, by rebuilding our dictionary once every $O(n)$ operations, we can assume without loss of generality that the lifespan of the dictionary is at most $O(n)$ operations. We will implement the dictionary using a hash table with capacity $n' = cn$ for some large positive constant $c$ to be determined later. This means that the hash table has failure probability 
$$1 / n^{\epsilon n'} = 1 / n^{\epsilon c n}.$$

Each operation takes place on a $\Theta(\log n)$-bit key/value pair, so there are at most $n^{O(1)}$ options for what a given operation could be. The total number of $O(n)$-long operation sequences is therefore at most $n^{O(n)}$. Since our hash table has failure probability $1 / n^{\epsilon c n}$, its total failure probability on any given sequence of $O(n)$ operations is at most 
$O(n) / n^{\epsilon c n} \le 1 / n^{\epsilon c n / 2}$ The probability that there exists \emph{any} sequence of operations on which our hash table fails to be constant-time is therefore at most
$$\frac{n^{O(1)}}{n^{\epsilon c n / 2}},$$
which if $c$ is taken to be a sufficiently large constant, is at most $1/2$. Thus there exists some choice of random bits for which our hash table is constant-time on \emph{every} sequence of operations. By hard-coding in this choice of random bits, we arrive at a deterministic constant-time dictionary. 

Note that, since the hash table spends total time $O(n)$ on the $O(n)$ operations, the number of random bits that it can use is at most $O(nw ) = O(n \log n)$ bits---thus the deterministic dictionary can hard-code the random bits in linear space. 
\end{proof}

Although one typically assumes a machine-word size of $\Theta(\log n)$ bits, it is also an interesting question what the strongest achievable probabilistic guarantees are in the setting where machine words (as well as keys/values) are of some size $w = \omega(\log n)$ bits. On one hand, the larger key size makes it so that Lemma \ref{lem:lowerhigh} no longer applies, so in principle, one might be able to achieve a failure probability of $1 / n^{\omega(n)}$. On the other hand, from an upper-bound perspective, it is not even known how to achieve a \emph{sub-polynomial failure probability} in this setting \cite{goodrich2011fully, goodrich2012cache, patracscu2012power, iceberg}. Here, the main obstacle appears to be unavoidably about hash functions: can one construct a family of hash functions from $[2^w]$ to $[\poly(n)]$ such that for any given $n$-element set $S \subseteq [2^w]$, we have that $\max_{x \in S} |\{y \in S \mid h(x) = h(y)\}| \le \polylog n$ with probability $1 / n^{\omega(1)}$? If such a family were to exist, then it could be directly combined with Theorem \ref{thm:high} to construct a dictionary that achieves sub-polynomial failure probability for any key-size $w$. We conjecture that no such family of hash functions exists, and moreover, that a sub-polynomial failure probability is not possible for word sizes $w = \omega(\log n)$ bits. 

\section{The Budget Rotated Trie}\label{sec:lowbits}

In this section, we present a dictionary that uses only $ O (\log n\log\log n) $ random bits, while guaranteeing that each operation takes constant time with probability $ 1-1/\poly (n) $ (i.e., with high probability in $n$). We will refer to the data structure as the \defn{budget rotated trie}. In Appendix \ref{app:hash}, we further extend the budget rotated trie to support keys that are $\omega(\log n)$ bits, while still using only $O(\log n \log \log n)$ bits of randomness.

We remark that the guarantee achieved by the budget rotated trie is not novel---in fact, a previous approach by Dietzfelbinger, Gil, Matias, and Pippenger \cite{dietzfelbinger1992polynomial} can be used to achieve $O(\log n)$ random bits for the setting of $\Theta(\log n)$-bit keys that we are considering. Nonetheless, we believe that the construction for the budget rotated tries is interesting in its own right, both because of its relationship to the amplified rotated trie, and also because of the surprising way in which it is able to make use of gradually-increasing-independence hash functions. Additionally, the specific structure of the budget rotated trie will prove useful in our quest for succinctness in Section \ref{sec:succinct}.

Our starting place is again the rotated trie, and as in Section \ref{sec:highprob}, we will take the fanout of the trie to be $n^\delta$ for some constant $\delta$; in fact, it will suffice to simply use $\delta = 1/4$. 

\paragraph{Reducing the number of random bits to $O(n / \polylog n)$.}
To transform the $n^\delta$-ary rotated trie into a budget rotated trie, our first modification will be to reduce the number of random bits from $O(n \log n)$ to $O(n / \polylog n)$. Of course, this may not seem like much progress, but we shall see later that the distinction is important.

Recall that, in a rotated trie, each ball $b$ (i.e., each non-null entry in an internal node) contains a pointer to either a leaf (i.e., an actual key/value pair) or another internal node (i.e., a child). We now add a third option: if the ball should be pointing at another internal node $x$, but if the subtree rooted at $x$ contains fewer than $\ell = \polylog n$ total keys, then we store that subtree as a dynamic fusion node $z$. If the size of the subtree rooted at $x$ subsequently surpasses $\ell$, then we create an actual internal node for $x$---in this case, any elements stored in the fusion node $z$ remain in $z$, and the ball $b$ now stores two pointers, one to $x$ and one to $z$. In other words, there are now three possible states for a ball: it can contain a pointer to a leaf; it can contain a pointer to a dynamic fusion node; or it can contain two pointers, one to a dynamic fusion node and one to another internal node of the trie.

The point of this modification is that we only create an internal node $x$ if the subtree rooted at $x$ contains at least $\ell = \polylog n$ elements. Importantly, this means that the total number of internal nodes $m$ is at most $O(n / \ell) = n / \polylog n$. The number of random bits needed for the rotations $r_1, r_2, \ldots, r_{m}$ is therefore also $n / \polylog n$. 

\paragraph{Changing the balls-to-bins mapping.}
Our next modification is to change how we map the balls to bins. Recall that each ball $b$ is specified by a pair $(s, c)$, where $s \in [m]$ is the source node of the ball and $c \in [n^\delta]$ is the child index. In the standard rotated trie, we map balls to bins using the function
$$\phi(s, c) = \left(c + r_s\right) \bmod n.$$

We will now instead map balls to bins using the function
$$\psi(s, c) = (c + a_s (\bmod n^\delta)) \cdot n^{1 - \delta} + b_s,$$
where $a_s$ is selected at random from $[n^\delta]$ and $b_s$ is selected at random from $[n^{1 - \delta}]$.

When can think about $\psi$ as follows. We break the bins into groups $G_1, \ldots, G_{n^\delta}$ of size $n^{1 - \delta}$, and we use the random value $a_s \in [n^{\delta}]$ to assign the ball to a random group. Once the ball is assigned to a group $G_i$, it is then assigned to the $b_s$-th bin in that group. Importantly, the assignments are designed so that each source node $s$ assigns \emph{at most one} of its balls to any given group $G_i$. There will never be two balls $b_1, b_2$ in group $G_i$ that both obtain their assignments $b_s$ from the same source node. 

Since the number $ m $ of internal nodes may be as large as  $n / \polylog n$, we cannot afford to generate $a_1, a_2, \ldots, a_m \in [n^{\delta}]$ and $b_1, b_2, \ldots, b_m \in [n^{1 - \delta}]$ truly at random. Fortunately, as we shall now see, the roles of the $a_i$s and $b_i$s have been carefully designed so that both sequences can be generated using a small number of ``seed'' random bits.

\paragraph{Generating the $a_i$s with $O(1)$-independent hash functions.} Let $k$ be a sufficiently large positive constant, and select a random hash function $g:[n] \rightarrow [n^\delta]$ from a family of $k$-independent hash functions. Since $k = O(1)$, the function $g$ can be specified using $O(\log n)$ random bits, and can be evaluated in time $O(1)$. We compute the $a_i$s by
$$a_i := g(i).$$  

To analyze the number of balls in each group $G_i$, we use a well-known tail bound for $k$-independent random variables (see, e.g., \cite{bellare1994randomness} or \cite{dubhashi2009concentration}).

\begin{lemma}[Lemma 2.2 of \cite{bellare1994randomness}]
Let $k \ge 4$ be an even integer. Suppose $X_1, \ldots, X_m$ are $k$-wise independent 0-1 random variables. Let $X = \sum_i X_i$. Then, for any $t \ge 0$,
$$\Pr[|X - \E[X]| \ge t] \le 2 \left(\frac{nk}{t^2}\right)^{k / 2}.$$
\label{lem:alpha}
\end{lemma}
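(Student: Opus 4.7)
The plan is to use the $k$-th moment method. Since $k$ is an even positive integer, the random variable $(X - \E[X])^k$ is nonnegative, so Markov's inequality gives
$$\Pr[|X - \E[X]| \ge t] \;=\; \Pr[(X - \E[X])^k \ge t^k] \;\le\; \frac{\E[(X - \E[X])^k]}{t^k}.$$
The entire task therefore reduces to producing a sufficiently sharp upper bound on the $k$-th central moment $\E[(X - \E[X])^k]$. This is the step I expect to be the main obstacle.

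To bound the central moment, let $Y_i = X_i - p_i$ where $p_i = \E[X_i]$, and expand by the multinomial theorem:
$$\E[(X - \E[X])^k] = \sum_{(i_1, \ldots, i_k) \in [m]^k} \E\!\left[\prod_{j=1}^{k} Y_{i_j}\right].$$
Since the $X_i$'s are $k$-wise independent, so are the $Y_i$'s, and the expectation in each term factors according to the multiset structure of the index tuple $(i_1, \ldots, i_k)$. Because $\E[Y_i] = 0$, every tuple in which some distinct index appears exactly once contributes zero. Thus only tuples in which every distinct index appears at least twice survive; such a tuple corresponds to an ordered set partition of $\{1, \ldots, k\}$ into some number $r \le k/2$ of blocks of sizes $s_1, \ldots, s_r \ge 2$, together with a choice of $r$ distinct indices from $[m]$ assigned to the blocks.

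Next I would bound each surviving term and then count them. Because $X_i \in \{0,1\}$, we have $|Y_i| \le 1$ and $\E[Y_i^s] \le \E[Y_i^2] = p_i(1-p_i) \le p_i$ for every $s \ge 2$. Summing the contribution $\prod_{j=1}^{r} p_{\text{idx}(j)}$ over all ways to assign distinct indices to the $r$ blocks is at most $(\sum_i p_i)^r = \mu^r$, where $\mu = \E[X]$. Aggregating over block structures gives
$$\E[(X - \E[X])^k] \;\le\; \sum_{r=1}^{k/2} N(k, r) \, \mu^r,$$
where $N(k,r)$ is the number of ordered set partitions of $[k]$ into $r$ blocks, each of size at least $2$.

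Finally I would use a Stirling-style estimate to bound $N(k,r)$. A standard count yields $N(k,r) \le k!/(2!^r \, r!) \le (k/2)^r \cdot \binom{k}{2r} \cdot (2r-1)!!$ or similar, and upon plugging in one verifies $\sum_r N(k,r)\mu^r \le 2 (k\mu)^{k/2}$ (the largest term, at $r = k/2$, dominates). Combining this with the Markov step yields
$$\Pr[|X - \E[X]| \ge t] \;\le\; \frac{2(k\mu)^{k/2}}{t^k} \;=\; 2\!\left(\frac{k\mu}{t^2}\right)^{k/2},$$
and using $\mu \le m \le n$ in the notation of the cited paper recovers the claimed $2(nk/t^2)^{k/2}$. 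The delicate part is the combinatorial counting of $N(k,r)$: a crude bound blows up the constant, so one has to be careful that the estimate is tight enough to avoid a spurious factor larger than the stated $k^{k/2}$.
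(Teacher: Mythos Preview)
The paper does not prove this lemma at all; it is quoted verbatim as Lemma~2.2 of Bellare--Rompel and used as a black box. So there is no ``paper's own proof'' to compare against.

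Your sketch is the standard moment-method argument and is essentially the proof that appears in the cited reference: apply Markov to $(X-\E[X])^k$, expand, use $k$-wise independence to factor, discard tuples containing a singleton index (since $\E[Y_i]=0$), and bound the surviving terms combinatorially. The one step you leave open---showing $\sum_{r\le k/2} N(k,r)\,\mu^r \le 2(k\mu)^{k/2}$---is indeed the only place where work remains, but it does go through: the number of ordered partitions of $[k]$ into $r$ blocks each of size at least $2$ is at most $\binom{k}{2,2,\ldots,2}\cdot 1 = k!/2^{k/2}$ when $r=k/2$, and for smaller $r$ the count drops fast enough relative to the loss of a factor of $\mu$. (A minor cosmetic point: the statement in the paper writes $n$ in the bound while the variables are indexed $1,\ldots,m$; your reading $\mu \le m$ and then identifying $m$ with the paper's $n$ is the correct interpretation.)
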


Define $X_j$ to be the event that source-node $j$ sends a ball to group $G_i$. The $X_j$s are $k$-independent, so we have by Lemma \ref{lem:alpha} that
$$\Pr[|G_i| - \E[|G_i|] \ge n^{0.75}] \le 2 \left(\frac{kn}{n^{1.5}}\right)^{k / 2} \le n^{-\Omega(k)} =  1 / \poly(n).$$ 
Since $\delta = 0.25$, it follows that
$$\Pr[|G_i| - \E[|G_i|] \ge n^{1 - \delta}] \le 1 / \poly(n).$$
Since each of the $O(n)$ balls is equally likely to be in any group, we have that $\E[|G_i|] = O(n^{1 - \delta})$. Thus
$$\Pr[|G_i| \le O(n^{1 - \delta})] \ge 1 -  1 / \poly(n).$$
That is, each group $G_i$ contains at most $O(n^{1 - \delta})$ balls with high probability in $ n $.

\paragraph{Generating the $b_i$s with increasing-independence hash functions.}
To generate the $ b_i $s without using a large number of random bits, we make use of a more sophisticated family of hash functions. Call a family $\mathcal{H}(t)$ of hash functions $h:[\poly(t)] \rightarrow [t]$ \defn{load-balancing} if it can be used to map $t$ balls to $t$ bins with maximum load $\polylog t$; that is, for any fixed set $S \subseteq \poly(t)$ of size $t$, and for any fixed $i \in [t]$, if we select a random $h \in \mathcal{H}$, then  $$|\{s \in S \mid h(s) = i\}| \le \polylog t$$ with probability $1 - 1 / \poly(t)$.

Celis, Reingold, Segev, and Wieder \cite{Reingold1} showed how to construct a load-balancing family $\mathcal{H}(t)$ of hash functions such that each $h \in \mathcal{H}$ can be described with $O(\log t \log \log t)$ random bits and can be evaluated in time $O(\log t \log \log t)$. The family $\mathcal{H}$ is referred to as having ``gradually-increasing-indepenence'' because each $h \in \mathcal{H}$ is actually the composition of $\Theta(\log \log t)$ hash functions $h_1, \ldots, h_{\Theta(\log \log t)}$ with different levels of independence: each $h_i$ determines $\Theta((3/4)^i \log t)$ bits of $h$, and each $h_i$ is $(1 / \poly t)$-close to being $\Theta((4/3)^i)$-independent. 

The family $\mathcal{H}$ comes with a tradeoff. It is able to achieve a maximum load of $\polylog t$ (in fact, it even achieves maximum load $O(\log t / \log \log t)$) using on $O(\log t \log \log t)$ bits, but it requires super-constant time to evaluate. Subsequent work \cite{Reingold2} has improved the evaluation time from $O(\log t \log \log t)$ to $O((\log \log t)^2)$. It seems unlikely that the evaluation time can be improved to $O(1)$, however, since $\Omega(\log \log t)$ time is needed just to read the random bits used to evaluate the hash function.

The super-constant evaluation time makes it so that hash functions with gradually-increasing independence are not suitable for direct use in constant-time hash tables \cite{Reingold3}. We get around this problem by using $h$ not as a hash function but as a pseudo-random number generator. Specifically, we select a random $h: [m] \rightarrow [n^{1 - \delta}]$ from $\mathcal{H}(n^{1 - \delta})$, and we use $h$ to initialize the $b_i$s as
$$b_i := h(i).$$

Since $h$ takes time $O((\log \log n)^2)$ to evaluate, each $b_i$ now takes time $O((\log \log n)^2)$ to initialize. Recall, however, that we only create a new internal node $x$ in our rotated trie once there are more than $\ell = \polylog n$ records that want to reside in that node's subtree; the first $\ell = \polylog n$ insertions that wish to use $x$ are instead placed into a dynamic fusion node that acts as a proxy for $x$. As a result, we can afford to spend up to $\ell$ time \emph{initializing} the node $x$, and we can spread that time across the $\ell$ insertions that trigger $x$'s initialization. Since $\ell = \omega((\log \log n)^2)$, we can initialize $b_i = h(i)$ without any problem.

\paragraph{Analyzing the maximum load.}
Recall that, with probability $1 - 1 / \poly(n)$, each group $G_i$ contains at most $O(n^{1 - \delta})$ balls. Furthermore, each of the balls have different source nodes than one another. If a ball has source-node $s$, then it is placed in the $b_s$-th bin of $G_i$. 

Let $S_i \subseteq [m]$ be the set of source nodes that assign balls to $G_i$. Then for each $r \in [n^{1 - \delta}]$, the number $g_{i, r}$ of balls in the $r$-th bin of $G_i$ is given by
$$g_{i, r} = |\{s \in S_i \mid h(s) = r\}|.$$
Since $h: \poly(n) \rightarrow [n^{1 - \delta}]$ is from a load-balancing family of hash functions, we are guaranteed to have
$$g_{i, r} \le \polylog n^{1 - \delta} \le \polylog n$$ 
with high probability in $n$. 

\paragraph{Putting the pieces together.}
The fact that each bin contains at most $\polylog n$ balls (with high probability) means that, as in the standard rotated trie, each bin can be implemented with a dynamic fusion node. Operations on our dictionary therefore take time $O(1)$ with high probability in $n$. If we implement deletions as in Section \ref{sec:prelim}, then we arrive at the following theorem.

\begin{theorem}
The budget rotated trie is a randomized linear-space dictionary that can store up to $n$ $\Theta(\log n)$-bit keys/values at a time, that uses $O(\log n \log \log n)$ random bits, and that supports each operation in constant time with probability $1 - 1 / \poly(n)$.
\label{thm:budget}
\end{theorem}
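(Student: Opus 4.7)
The plan is to assemble the three ingredients developed in the section and verify that together they give the four properties claimed: linear space, $O(\log n \log \log n)$ random bits, $O(1)$ time per operation, and failure probability $1/\poly(n)$. First, I would confirm the data-structural bookkeeping. The lazy policy---creating a true internal node $x$ only once the subtree rooted at $x$ has grown past $\ell = \polylog n$ keys---ensures that the total number of internal nodes satisfies $m = O(n/\ell) = O(n/\polylog n)$, so the only randomness that has to be generated is the sequence of pairs $(a_i, b_i)$ for $i \in [m]$. Describing $g : [n] \to [n^\delta]$ from a constant-independent family takes $O(\log n)$ bits; describing $h \in \mathcal{H}(n^{1-\delta})$ takes $O(\log n \log \log n)$ bits; together this meets the seed-length claim. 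Linear space is immediate: we have $O(n/\polylog n)$ internal nodes together with $n$ bins, each a dynamic fusion node of capacity $\polylog n$, plus the fusion-node proxies that store at most $\ell$ elements each.

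Next I would chain the two concentration bounds into a single per-operation guarantee. Fix the group $G_i$ relevant to the current operation. Because $g$ is $k$-wise independent for a sufficiently large constant $k$, Lemma \ref{lem:alpha} gives $|G_i| = O(n^{1-\delta})$ with probability $1 - 1/\poly(n)$, which is the regime where $\mathcal{H}(n^{1-\delta})$ is load-balancing. Conditioned on this, the balls routed into $G_i$ come from pairwise distinct source nodes $S_i \subseteq [m]$, and each such ball $s \in S_i$ is assigned to the $h(s)$-th bin of $G_i$; the load-balancing property of $\mathcal{H}$ then gives that each bin of $G_i$ holds at most $\polylog n$ balls, again with probability $1 - 1/\poly(n)$. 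A union bound over the two failure events yields per-operation failure probability $1/\poly(n)$, and whenever both events succeed, the relevant dynamic fusion node contains at most $\polylog n$ elements and resolves the operation in $O(1)$ time.

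The main obstacle is reconciling the $\Theta((\log \log n)^2)$ evaluation cost of $h$ with the worst-case constant-time requirement, since even reading the seed of $h$ needs $\omega(1)$ time. The resolution exploits laziness: $h(i)$ only needs to be computed at the moment internal node $i$ is actually instantiated, and by construction such an instantiation is triggered precisely when $\ell = \polylog n$ insertions have already accumulated in the fusion-node proxy for $i$. Because $\ell = \omega((\log \log n)^2)$, the one-time cost of evaluating $h(i)$---together with the cheaper $a_i = g(i)$---can be deamortized across those $\ell$ triggering insertions, so each of them still pays only $O(1)$ in the worst case. Every subsequent operation touching node $i$ just reads the stored values $(a_i, b_i)$ and pays $O(1)$ outright. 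Composing this with the deletion-via-tombstones-plus-global-rebuild reduction of Section \ref{sec:prelim} completes the proof.
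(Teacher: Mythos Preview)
Your proposal is correct and follows essentially the same approach as the paper: you assemble the lazy node-creation bound on $m$, the $k$-wise independence of $g$ to control group sizes via Lemma~\ref{lem:alpha}, the load-balancing property of $h\in\mathcal{H}(n^{1-\delta})$ to control per-bin load within each group, and the deamortization of the $\Theta((\log\log n)^2)$ evaluation of $h$ across the $\ell$ triggering insertions, exactly as the paper does. The only tiny imprecision is that an operation may touch $O(1/\delta)=O(1)$ bins (one per trie level) rather than a single group, but this is absorbed by the same $1/\poly(n)$ union bound and the paper is equally informal on this point.
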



We conclude the section by observing that there is a strong sense in which the guarantee achieved by the budget rotated trie is optimal. In particular, if there were to exist a hash table  failure probability $1 / n^c$ but that used fewer than $c \log n$ random bits, then there would also necessarily exist a deterministic linear-space constant-time dictionary. 

\begin{lemma}
Suppose there exists a randomized linear-space dictionary that can store up to $n$ $\Theta(\log n)$-bit keys/values at a time, that uses $c \log n$ random bits, but that has a failure probability smaller than $1 / n^c$. Then there exists a deterministic dictionary with the same guarantees.
\end{lemma}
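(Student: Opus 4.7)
The plan is to exploit the strict inequality in the hypothesis: with only $n^c$ possible random seeds and a strictly-below-$1/n^c$ failure probability per operation, a simple pigeonhole argument shows that \emph{every} seed already gives a deterministic dictionary with the claimed guarantees. No rebuild trick or union bound over operation sequences is needed.

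Concretely, I would view the $c \log n$ random bits used by the dictionary as parameterizing a family $\{D_s\}$ of $n^c$ deterministic dictionaries, one per seed $s \in \{0,1\}^{c \log n}$. Then fix an arbitrary operation sequence $\sigma$ (of any length) and any operation $o_i$ of $\sigma$. Writing $F_i^\sigma(s) \in \{0,1\}$ for the indicator that $D_s$ fails (i.e., takes super-constant time) on $o_i$, the per-operation failure probability of the randomized dictionary is exactly
$$p_i^\sigma \;=\; \frac{1}{n^c}\sum_s F_i^\sigma(s),$$
which by assumption satisfies $p_i^\sigma < 1/n^c$. Hence $\sum_s F_i^\sigma(s) < 1$, and since this is a nonnegative integer it must equal $0$. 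So every seed $s$ succeeds on operation $o_i$ of $\sigma$.

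Ranging over all $\sigma$ and all $i$, we conclude that every $D_s$ is a deterministic constant-time linear-space dictionary. Hard-coding any single seed (say, the all-zeros one) produces the desired deterministic dictionary, matching the space, correctness, and timing guarantees of the original.

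The one point requiring a little care is ensuring that the failure probability of each operation really is a rational number with denominator $n^c$, so that the strict bound can be sharpened to zero. This holds because the random bits are supplied as a uniform i.i.d.\ stream (per Section \ref{sec:prelim}) and the dictionary consumes at most $c \log n$ of them; we may therefore pad to exactly $c \log n$ bits and treat each of the $n^c$ prefixes as equally likely. The main conceptual obstacle, then, is just noticing the integrality argument itself; once that is in hand, the proof is essentially a one-line pigeonhole.
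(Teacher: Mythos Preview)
Your argument is correct and is essentially the same as the paper's proof. The paper presents it as a direct contradiction---fix $R=0$, and if the resulting deterministic dictionary ever fails on some operation, that single seed already contributes probability $\ge 1/n^c$ to the failure event, violating the strict bound---whereas you unwind the same observation as an integrality/pigeonhole statement showing that in fact \emph{every} seed works; the content is identical.
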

\begin{proof}
To distinguish the randomized dictionary from the deterministic dictionary that we are constructing, we will refer to the former as a hash table. Let $R$ denote the $c \log n$ random bits used by the hash table. Define $\mathcal{D}$ to be the deterministic dictionary obtained by setting $R = 0$. Suppose for contradiction that $D$ is not constant time. Then there exists some sequence of operations such that the final operation on $D$ takes super-constant time. This means that, with probability at least $1 / n^c$, that same operation would have taken super-constant time in our hash table. But the hash table has failure probability smaller than $1 / n^c$, a contradiction. 
\end{proof}

\section{Achieving Succinctness}\label{sec:succinct}

In this section, we turn our attention to space efficiency. Throughout the section, we will use $c_1 \log n$ to denote the size in bits of each key, we will use $c_2 \log n$ to denote the size in bits of each value, and will assume that $c = c_1 + c_2$ is a positive constant larger than $1$. Here, unlike in previous sections, we use $n$ to denote the \emph{current size}, at any given moment, and we allow $n$ to change dynamically over time, so long as the key/value length remains $\Theta(\log n)$ bits at all times after initialization---this means that the constants $c_1, c_2, c$ also change dynamically.\footnote{Note that keys trivially must have length at least $\log n$ bits, so the key/value size is necessarily $\Omega(\log n)$. If we assume that values are asymptotically no larger than keys, then one can enforce the bound of keys/values having length $O(\log n)$ by treating the dictionary as containing an extra $2^{\epsilon w}$ dummy elements, where $w$ is the key/value length and $\epsilon$ is a small positive constant.} 

The dictionaries that we have described in previous sections are already optimal up to constant factors, using a total of $\Theta(n \log n)$ bits to store $n$ key/value pairs. We shall now strive to achieve optimal space consumption up to low-order terms, that is, to use a total of
\begin{equation}
(1 + o(1)) \log \binom{2^{c \log n}}{n} = cn\log n - n \log n + o(n \log n)
\label{eq:optimalspace}
\end{equation}
bits. Such a dictionary is referred to as \defn{succinct} \cite{Raman03Succinct}.

In fact, we will prove a much more general result: that any constant-time dictionary can be \emph{transformed} into a succinct constant-time dictionary, while (nearly) preserving the random-bit usage and failure probability of the original dictionary. 

Define an \defn{$(r(n), p(n), s(n))$-dictionary} to be any constant-time dictionary that uses $O(r(n))$ random bits, achieves failure probability $O(p(n))$ per operation, and uses space $cn\log n - n \log n + O(s(n))$ bits. Since we are interested in dictionaries that automatically resize as the number of keys change, one should think of $r(n)$, $p(n)$, and $s(n)$ as functions rather than fixed values. Note that, in the context of space-efficient time-efficient dictionaries, a failure event could be in terms of either time (an operation takes $\omega(1)$ time) or space (the dictionary fails to fit in $cn\log n - n \log n + O(s(n))$ bits)---and a failure probability of $p(n)$ means that, at any given moment, the probability of a failure occurring should be at most $O(p(n))$. 

The main result of the section can be stated as follows.

\begin{theorem}
Let $\epsilon$ be a small positive constant. Suppose that $r(n)$ and $p(n)$ are nondecreasing functions satisfying $r(n) \le O(n)$ and $\exp(-n^{1 - \epsilon}) \le p(n) \le 1 / \polylog(n)$. 
 Given an $(r(n), p(n), n \log n)$-dictionary, one can construct a $(r'(n), p'(n), s'(n))$-dictionary with 
$$r'(n) = r(n) + (\log p(n)^{-1}) \cdot (\log \log n)^3,$$
$$p'(n) = p(n / \log \log n),$$
and
$$s'(n) = \frac{n \log n}{\log \log n}.$$
\label{thm:transform}
\end{theorem}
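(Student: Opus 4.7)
The plan is to follow a Raman--Rao style quotienting reduction, using the given $(r(n),p(n),n\log n)$-dictionary as a directory at the reduced scale $B := n/\log\log n$ and using the budget rotated trie of Theorem \ref{thm:budget} internally as a memory manager for variable-size buckets. Draw a load-balancing hash function $h:U\to[B]$ (from a suitably boosted variant of the gradually-increasing-independence family of Section \ref{sec:lowbits}) and partition the $n$ keys into buckets of expected size $\log\log n$. Each key $x$ is represented inside its bucket $h(x)$ by a short signature $\sigma(x)$ of length $(c-1)\log n + O(\log\log n)$ bits together with its value, and each bucket's (signature, value) pairs are packed into a contiguous block in a shared arena. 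Intra-bucket queries are serviced by a tiny compressed search structure that fits in a constant number of machine words given the high-probability bound on bucket size.

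Summed across buckets, the signature and value payload uses $(c-1)n\log n + c_2 n\log n + O(n\log\log n)$ bits, matching the information-theoretic target up to low-order terms. The arena is managed by partitioning it into $O(\log\log n)$ geometric size classes, with one budget rotated trie per class acting as a free list; standard amortization (de-amortized in the usual way) gives $O(1)$ worst-case reallocation time and keeps fragmentation within $O(n\log n/\log\log n)$ bits. The bucket directory, which maps each nonempty $q \in [B]$ to its block's offset and length, is realized by the given $(r(\cdot),p(\cdot),n\log n)$-dictionary instantiated at scale $B$, costing $O(B\log n) = O(n\log n/\log\log n)$ bits of space, $r(B)\le r(n)$ random bits, and $p(B) = p(n/\log\log n)$ of per-operation failure probability. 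Adding everything up, the total space is $cn\log n - n\log n + O(n\log n/\log\log n)$ bits, and the total random bits are $r(n) + O((\log p(n)^{-1})(\log\log n)^3)$, with the second term coming entirely from $h$.

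The main obstacle is producing $h$: a load-balancing hash function whose maximum load exceeds its target with probability at most $p(n/\log\log n)$ using only $O((\log p(n)^{-1})(\log\log n)^3)$ random bits. I would obtain this by adapting the $\Theta(\log\log n)$-layered Celis--Reingold--Segev--Wieder construction \cite{Reingold1}, strengthening the moment bounds used in its analysis to push the failure probability down from $1/\poly(n)$ to $p(n/\log\log n)$; concretely, this requires amplifying the independence of each layer by a $\Theta(\log p(n)^{-1}/\log n)$ factor so that the $k$-independent moment bound of Lemma \ref{lem:alpha} gives the target failure probability, at the cost of an additional $O((\log p(n)^{-1})(\log\log n)^2)$ random bits per layer across $\Theta(\log\log n)$ layers. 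The super-constant evaluation cost of $h$ can be amortized against the $\polylog n$ insertions preceding each bucket's first materialization, exactly as in Section \ref{sec:lowbits}. A secondary subtlety is verifying that reallocations of bucket blocks can be composed with updates to the directory in true $O(1)$ worst-case time per operation, which follows from treating the directory entries as opaque $\Theta(\log n)$-bit values that are updated in place.
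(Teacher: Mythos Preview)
Your plan has a genuine gap at its central step: the boosted load-balancing hash function $h$ that you posit does not exist in the regime where $p(n)$ is sub-$\exp(-\polylog n)$, and in particular when $p(n)=\exp(-n^{1-\epsilon})$. The obstruction is information-theoretic, not a matter of tightening the CRSW analysis. Even with a \emph{fully random} $h:U\to[B]$, $B=n/\log\log n$, a fixed bin has load exceeding any threshold $L=\polylog n$ with probability at least $\exp(-\Theta(L\log L))=\exp(-\polylog n)$; hence the event ``some bucket exceeds $\polylog n$'' has probability $\exp(-\polylog n)\gg \exp(-n^{1-\epsilon})$. Consequently no amount of independence amplification can make ``max load $\le\polylog n$'' hold with probability $1-p(n/\log\log n)$ across the whole range of $p$ allowed by the theorem. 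Your intra-bucket ``tiny compressed search structure that fits in a constant number of machine words'' therefore cannot be relied upon, and if you instead let bucket capacity scale to $\Theta(\log p(n)^{-1})$ (which is what the tail actually permits), you lose both the $O(1)$ intra-bucket search and the $O(n\log n/\log\log n)$ fragmentation bound for the arena.

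The paper sidesteps exactly this barrier by \emph{not} using a random hash to form buckets. It first applies the Raman--Rao reduction (Theorem~\ref{thm:RR}) to obtain, deterministically, the many-sets instance $\{S_i\}$. Each skeleton $A_i$ is then a budget rotated trie with only a $1/\poly(|S_i|)$ per-insertion failure probability; failures are shipped to a backyard implemented with the given $(r,p,n\log n)$-dictionary. The probability amplification from $1/\poly(|S_i|)$ down to $p(n)$ is obtained not by strengthening any hash function but by partitioning the $A_i$'s of each size category into $t_j=\Theta((\log\log n)^2\log p(n)^{-1}/j)$ groups with \emph{independent} random seeds and applying a Chernoff bound across groups to the total overflow (Lemma~\ref{lem:smallbackyard}); this is what buys the $\poly(p(n))$ tail on backyard size while costing only $O((\log\log n)^3\log p(n)^{-1})$ fresh random bits. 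In short, the given dictionary is used as a \emph{backyard for overflows}, not as a bucket directory, and the concentration is over independent groups of skeletons rather than over balls-in-bins; your hash-and-quotient architecture would need a fundamentally different mechanism to handle the low-$p(n)$ end of the range.
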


Applying Theorem \ref{thm:transform} to Theorems \ref{thm:high} and \ref{thm:budget}, we get the following succinct versions of the theorems. 

\begin{corollary}
Let $0 < \epsilon < 1$ be a positive constant. There exists a $(r(n), p(n), s(n))$-dictionary that uses $r(n) = O(n)$ random bits, that incurs a failure probability $p(n) = \exp(-n^{1 - \epsilon})$, and that incurs an additive space overhead of $s(n) = O\left(\frac{n \log n}{\log \log n}\right)$ bits compared to the information-theoretical optimum.
\label{cor:high}
\end{corollary}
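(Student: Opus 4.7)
The plan is to apply Theorem \ref{thm:transform} to the amplified rotated trie of Theorem \ref{thm:high}, with parameters chosen so that the output matches the statement. Given the corollary's $\epsilon$, I would set $\epsilon_H = \epsilon / 2$ and take the base dictionary to be the $n^{\epsilon_H / 5}$-ary amplified rotated trie. By Theorem \ref{thm:high}, this is an $(r(n), p(n), n \log n)$-dictionary with failure probability $p(n) = O(1/n^{n^{1 - \epsilon_H}}) = \exp(-\Theta(n^{1-\epsilon_H} \log n))$.

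Before invoking Theorem \ref{thm:transform}, I would verify its hypotheses using (say) $\epsilon_T = \epsilon / 4$ as the theorem's parameter. The upper bound $p(n) \le 1/\polylog n$ is immediate, and the lower bound $p(n) \ge \exp(-n^{1 - \epsilon_T})$ follows for large $n$ from $n^{1 - \epsilon_H} \log n \le n^{1 - \epsilon_T}$ (since $\epsilon_T < \epsilon_H$). The remaining hypothesis, $r(n) \le O(n)$, requires slightly more care and is discussed below.

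Given the hypotheses, I would simply read off the transformed parameters. For random bits,
$$r'(n) = r(n) + (\log p(n)^{-1}) \cdot (\log \log n)^3 = O(n) + O\bigl(n^{1-\epsilon_H}(\log n)(\log \log n)^3\bigr) = O(n);$$
the additive space overhead is $s'(n) = O(n \log n / \log\log n)$, matching the corollary; and the failure probability is $p'(n) = p(n/\log\log n) = \exp\bigl(-\Theta((n/\log\log n)^{1-\epsilon_H} \log(n/\log\log n))\bigr)$. Since $(n/\log\log n)^{1-\epsilon_H} \log(n/\log\log n) \ge n^{1 - \epsilon_H - o(1)} \log n \ge n^{1 - \epsilon}$ for sufficiently large $n$, we conclude $p'(n) \le \exp(-n^{1 - \epsilon})$, as required.

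The main obstacle is really a bookkeeping issue: ensuring $r(n) \le O(n)$ for the base dictionary. A naive accounting of the amplified rotated trie allows up to $\Theta(n)$ internal nodes, each storing a $\Theta(\log n)$-bit rotation, for a total of $\Theta(n \log n)$ random bits. To bring this down to $O(n)$, I would import the subtree-consolidation trick from the budget rotated trie of Section \ref{sec:lowbits}: an internal node is materialized (and assigned a random rotation) only once its subtree contains at least $\polylog n$ keys, with smaller subtrees absorbed into a single dynamic fusion node proxied by the parent. This caps the number of materialized internal nodes at $O(n / \polylog n)$ and hence the total random bits at $O(n)$. A short check confirms that this change does not disturb the McDiarmid argument of Theorem \ref{thm:high}, since the per-node Lipschitz constant is unchanged and the number of ``source nodes'' driving the overflow analysis only shrinks.
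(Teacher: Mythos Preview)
Your approach is exactly the paper's: the corollary is stated as an immediate consequence of applying Theorem \ref{thm:transform} to the amplified rotated trie of Theorem \ref{thm:high}, with no further detail given. Your observation about the $r(n) \le O(n)$ precondition is well-taken---the amplified rotated trie as written in Section \ref{sec:highprob} uses $\Theta(n \log n)$ random bits, and the paper tacitly relies on something like the subtree-consolidation trick (introduced only later, in Section \ref{sec:lowbits}) to bring this down to $O(n)$; your fix is correct and does not disturb the McDiarmid analysis, so it fills a gap the paper glosses over.
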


\begin{corollary}
There exists a $(r(n), p(n), s(n))$-dictionary that uses $r(n) = O(\log n (\log \log n)^3) = \tilde{O}(\log n)$ random bits, that incurs a failure probability $p(n) = 1 / \poly(n)$, and that incurs an additive space overhead of $s(n) = O\left(\frac{n \log n}{\log \log n}\right)$ bits compared to the information-theoretical optimum.
\label{cor:budget}
\end{corollary}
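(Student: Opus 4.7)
The plan is to obtain Corollary \ref{cor:budget} as an immediate application of the black-box transformation from Theorem \ref{thm:transform}, taking the budget rotated trie of Theorem \ref{thm:budget} as the input dictionary. The entire argument reduces to verifying hypotheses and computing the output parameters.

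First I would observe that Theorem \ref{thm:budget} provides a linear-space dictionary storing $n$ keys/values of $\Theta(\log n)$ bits, so it fits within $cn\log n - n\log n + O(n\log n)$ bits; equivalently, it is an $(r(n), p(n), n \log n)$-dictionary in the sense of Section \ref{sec:succinct}, with random-bit budget $r(n) = O(\log n \log \log n)$ and per-operation failure probability $p(n) = 1/\poly(n)$. I would need to check that these two functions satisfy the hypotheses of Theorem \ref{thm:transform}. The random-bit bound $r(n) = O(\log n \log \log n)$ is comfortably $\le O(n)$. The failure-probability bound $p(n) = 1/n^{\Theta(1)}$ satisfies $1/n^{\Theta(1)} \le 1/\polylog(n)$ and $1/n^{\Theta(1)} \ge \exp(-n^{1-\epsilon})$ for any fixed $\epsilon \in (0,1)$, both for all sufficiently large $n$. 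Monotonicity is automatic. Hence Theorem \ref{thm:transform} applies.

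Next I would evaluate the three output parameters promised by Theorem \ref{thm:transform}. For randomness,
\[
r'(n) = r(n) + (\log p(n)^{-1}) \cdot (\log \log n)^3 = O(\log n \log \log n) + O(\log n)\cdot (\log\log n)^3 = O(\log n (\log \log n)^3),
\]
which is the claimed $\tilde{O}(\log n)$. For failure probability, since $p$ is $1/n^{\Theta(1)}$ and $n/\log\log n = n^{1 - o(1)}$, we have
\[
p'(n) = p(n/\log\log n) = 1/(n/\log\log n)^{\Theta(1)} = 1/\poly(n),
\]
as required. For space, $s'(n) = O(n\log n / \log\log n)$ directly matches the additive overhead stated in the corollary. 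Combining these three outputs yields precisely a dictionary with the parameters promised by Corollary \ref{cor:budget}.

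The argument has essentially no obstacle, since Theorem \ref{thm:transform} does all the real work; the only thing to be a little careful about is ensuring that $p(n) = 1/\poly(n)$ is stable under the substitution $n \mapsto n/\log\log n$, which it is because $n/\log\log n$ is polynomially related to $n$. The only subtle conceptual point worth underscoring in the write-up is that the ``$n\log n$'' in the input parameter $(r(n), p(n), n\log n)$ denotes the additive overhead slot $s(n)$, so matching the linear-space guarantee of Theorem \ref{thm:budget} to Theorem \ref{thm:transform}'s input specification requires noting that $O(n\log n)$ additive overhead is exactly what a linear-space (as opposed to succinct) dictionary provides.
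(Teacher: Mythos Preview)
Your proposal is correct and follows exactly the paper's approach: the paper states Corollary \ref{cor:budget} immediately after Theorem \ref{thm:transform} with the one-line justification that it is obtained by applying Theorem \ref{thm:transform} to Theorem \ref{thm:budget}. Your write-up is simply a more explicit verification of the hypotheses and output parameters of that application, and all the computations are correct.
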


The rest of the section will be spent proving Theorem \ref{thm:transform}. We begin in Subsection \ref{sec:reduction} by presenting a reduction due to Raman and Rao \cite{Raman03Succinct}, which transforms the problem of constructing a succinct dictionary into a different problem, which we call the \defn{many-sets problem}. Then, in Subsection \ref{sec:ourreduction}, we show how to solve the many-sets problem using an $(r(n), p(n), O(n\log n))$-dictionary, while approximately preserving the randomness guarantees of the dictionary---an interesting feature of our solution is that it makes extensive use of the budget rotated trie constructed in Section \ref{sec:lowbits}.

\subsection{Reduction to the Many-Sets Problem}\label{sec:reduction}

An important tool in our proof of Theorem \ref{thm:transform} will be a reduction due to Raman and Rao \cite{Raman03Succinct}. This reduction transforms the problem of constructing a succinct dynamic dictionary into a different problem that we call the many-sets problem.

Let $\delta > 0$ be a small positive constant of our choice. The \defn{many-sets problem}, with parameter $\delta$, is defined as follows. Let $S_1, S_2, \ldots, S_{m}$ be (dynamically changing) sets of $O(\log n)$-bit key/value pairs, satisfying $\sum_i |S_i| = n$, $m \le n / \polylog(n)$, and $|S_i| \le n^\delta$ for all $i \in [m]$, where $n$ and $m$ are permitted to evolve dynamically over time. We will use $\gamma_i = O(\log n)$ to denote the size of each key/value pair in $S_i$ (so different $S_i$s may have differently sized key/value pairs). 

Any solution to the \defn{many-sets problem} must support the following operations in constant time: 
\begin{itemize}
    \item \textsc{Insert$(i, x, y)$}, which inserts key/value pair $(x, y)$ into $S_i$; \item \textsc{Delete$(i, x)$}, which deletes $x$ (and its corresponding value) from $S_i$; \item  and \textsc{Query$(i, x)$}, which returns the value associated with $x$ in $S_i$, or declares that $x \not\in S_i$.\end{itemize} A many-sets solution is said to be an \defn{$(r(n), p(n), s(n))$-solution} if it uses $r(n)$ random bits, has failure probability $p(n)$ per insertion, and uses total space
$$\sum_i |S_i| \cdot \left(\gamma_i + O(\log |S_i|)\right) + s(n)$$
bits.

The following reduction is given (implicitly) in Section 3 of \cite{Raman03Succinct}:
\begin{theorem}
The problem of constructing a $(r(n), p(n), s(n) + n \log n / \log \log n)$-dictionary reduces deterministically to the problem of constructing an $(r(n), p(n), s(n))$-solution to the many-sets problem, where the parameter $\delta$ is a positive constant of our choice.
\label{thm:RR}
\end{theorem}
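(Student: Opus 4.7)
The plan is to use the standard ``quotienting'' decomposition. Partition the key universe $[2^{c_1 \log n}]$ into $m$ buckets via a bucket function $q : [2^{c_1\log n}] \to [m]$ (the canonical choice being integer quotient by $2^{c_1\log n}/m$), and let $r(x)$ denote the corresponding in-bucket residual of $c_1\log n - \log m$ bits, chosen so that $(q(x),r(x))$ uniquely determines $x$. Represent the dictionary by the tuple of sets $S_1,\ldots,S_m$, where $S_i$ holds the pairs $(r(x), v)$ for every inserted key $x$ with $q(x)=i$. Each dictionary operation on $(x,v)$ translates in $O(1)$ deterministic time to the corresponding many-sets operation on $S_{q(x)}$ with residual $r(x)$, so the reduction introduces no additional randomness, and the dictionary inherits the many-sets solution's random bits and per-operation failure probability directly.

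The heart of the proof is the space accounting. Each pair in $S_i$ has $\gamma_i = c\log n - \log m$ bits (the $c_2\log n$-bit value is preserved, while the key shrinks by $\log m$ bits because the bucket index is implicit in which set it lies in), so the many-sets solution uses
\[
\sum_i |S_i|\bigl(\gamma_i + O(\log |S_i|)\bigr) + s(n) \;=\; n(c\log n - \log m) + O\!\left(\sum_i |S_i|\log |S_i|\right) + s(n)
\]
bits. Choosing $m = \lceil n/\log n \rceil$ gives $\log m = \log n - O(\log\log n)$, so the leading term equals the information-theoretic optimum $cn\log n - n\log n$ plus an additive $O(n\log\log n)$ slack. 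The residual term $\sum_i |S_i|\log|S_i|$ is bounded, using the many-sets promise $|S_i| \le n^\delta$, by $\delta n\log n$; since $\delta$ is a positive constant of our choice, we pick it small enough that this contribution is absorbed into the $O(n\log n / \log\log n)$ budget.

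The main obstacle is maintaining the promise $|S_i| \le n^\delta$ (and the global promise $m \le n/\operatorname{polylog}(n)$) as keys are inserted and deleted, since pure integer-quotient bucketing does not guarantee this against adversarial input. The Raman--Rao approach is to maintain the partition adaptively: keep the invariant that each $|S_i|$ stays below $n^\delta$, and whenever an insertion threatens to violate it, split the offending bucket in two and redistribute its elements; symmetrically merge buckets as $n$ shrinks. Combined with the phase-based rebuilds of Section~\ref{sec:prelim} (which re-establish the global invariant $m = \Theta(n/\log n)$ after $n$ changes by a constant factor), the amortized cost of these structural adjustments is $O(1)$ per operation and requires no randomness of its own -- the splits/merges just trigger additional Insert/Delete calls against the many-sets solution, whose failure probability is already accounted for in $p(n)$. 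Putting this together with the space bound above yields an $(r(n), p(n), s(n) + n\log n / \log\log n)$-dictionary, as claimed.
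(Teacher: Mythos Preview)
Your space accounting contains a genuine gap. You bound the residual term by $\sum_i |S_i|\log|S_i| \le \delta n\log n$ and then assert that, ``since $\delta$ is a positive constant of our choice, we pick it small enough that this contribution is absorbed into the $O(n\log n/\log\log n)$ budget.'' But $\delta$ is a \emph{constant}, so $\delta n\log n = \Theta(n\log n)$, which is asymptotically larger than $n\log n/\log\log n$ for every fixed $\delta>0$. No constant choice of $\delta$ makes this work; the term simply does not fit in the allowed slack.

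The underlying problem is structural. In a single-level quotienting scheme, $\gamma_i = c\log n - \log m$ is the same for every bucket and, crucially, does not depend on $|S_i|$. So the many-sets cost $\sum_i |S_i|(\gamma_i + O(\log|S_i|))$ has an unavoidable additive $O\bigl(\sum_i |S_i|\log|S_i|\bigr)$ on top of the optimum, and an adversary who concentrates keys into buckets of size $\Theta(n^\delta)$ forces this to be $\Theta(\delta n\log n)$. Your adaptive binary splitting does not fix this: each split shaves only one bit from $\gamma_i$, so to cancel an $O(\log|S_i|)=O(\delta\log n)$ overhead you would need $\Theta(\log n)$ levels of splitting, destroying the constant-time guarantee.

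What the Raman--Rao reduction actually does is different in exactly this respect. It builds a constant-depth trie where each internal node with $r$ elements has fanout $r/\polylog n$, so at \emph{every} level on the path to a leaf $S_i$ roughly $\log|S_i| - O(\log\log n)$ bits are shaved. Because the depth is a constant of our choice, this yields $\gamma_i \le c\log n - \log n - q\log|S_i| + O(\log\log n)$ for any constant $q$ we like. The key point is that $\gamma_i$ \emph{decreases} as $|S_i|$ grows, so the $-q\log|S_i|$ term absorbs the many-sets solution's $O(\log|S_i|)$ per-element overhead, leaving only $O(n\log\log n)$ total slack. Your single-level scheme lacks precisely this coupling between $\gamma_i$ and $|S_i|$, and that is the missing idea.
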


\paragraph{Understanding the reduction.} Although we defer the full proof of Theorem \ref{thm:RR} to \cite{Raman03Succinct}, we take a moment to briefly describe the high-level structure here. The key insight is to make use of tries in a clever way (that differs substantially from how they are used elsewhere in this paper). Different nodes of the trie have different fanouts: if a node has $r > \polylog n$ keys in it and has depth $O(1)$ in the trie, then it has fanout $r / \polylog n$. If a node has either $r \le \polylog n$ keys in it, or has sufficiently large constant depth in the trie, then the node is a \defn{leaf}, and the elements of the node correspond to a set $S_i$ in the many-sets problem. 

Unlike for the data structures in this paper, the internal nodes of this trie are implemented using straightforward arrays: if a node has fanout $f$, it is implemented using an array of size $f$. Fortunately, by choosing the fanout $f$ to be $r / \polylog n$, where $r$ is the number of elements stored by the node, one ensures that the arrays used to implement internal nodes have cumulative size $n / \polylog n$. 

A key insight is that the trie allows for us to shave bits off of keys: if a node $x$ has fanout $f$, then we can remove the high-order $\log f$ bits from each of the keys stored in $x$'s children (these bits are now stored implicitly in the trie path). Raman and Rao \cite{Raman03Succinct} show that, if a leaf has size $|S_i|$, then the length $\gamma_i$ of each of the key/value pairs in $S_i$ will satisfy $\gamma_i \le c \log n  - \log n - q \log |S_i| + O(\log \log n)$ bits for a positive constant $q$ of our choice (depending on the maximum depth of the trie). This is why, for the many-sets problem, it is okay to use (amortized) space $\gamma_i + O(\log |S_i|)$ bits per key in $S_i$. 

We remark that, in \cite{Raman03Succinct}, Theorem \ref{thm:RR} was proved with \emph{amortized} constant-time operations. The amortization came from the fact that, whenever a node $x$'s size changes substantially in the trie, it must be rebuilt; however, by spreading the rebuild across $\Theta(|S_i|)$ operations (that each modify the node $x$), the rebuild can trivially be deamortized to take $O(1)$ time per operation (and without compromising space efficiency, since each element is in only one version of the node at a time).

Before continuing, it is worth taking a moment to understand where the bounds $m \le n / \polylog(n)$ and $|S_i| \le n^\delta$ (which are assumed in the many-sets problem) come from. The bound $m \le n / \polylog(n)$ comes from the fact that each internal node of the trie has fanout $r / \polylog(n)$, where $r$ is its size; since the trie has depth $O(1)$, this means that the total number of leaves (and thus the number of $S_i$s) is at most $O(n / \polylog n)$. The bound $|S_i| \le n^\delta$ comes from the fact that, if a given $S_i$ were to have size $> n^\delta$, then at each node on its root-to-leaf path, each element in the set would have $\delta \log n - O(\log \log n)$ bits shaved off; but assuming that the trie has sufficiently large constant depth, this means that \emph{all} of the bits are shaved off from the elements of $S_i$, which is a contradiction.

\paragraph{A starting place: Raman and Rao's solution to the many-sets problem.}
Raman and Rao \cite{Raman03Succinct} give a simple solution to the many-sets problem that incurs constant expected time per operation, and which will serve also as the starting point for our construction. At a high level, each $S_i$ is stored in a two-part structure, consisting of a \defn{skeleton} $A_i$ and a \defn{storage array} $B_i$.\footnote{The names \emph{skeleton} and \emph{storage array} are conventions that we are establishing here for ease of discussion, and were not used in the original paper \cite{Raman03Succinct}.}

The storage array $B_i$ is logically implemented as a dynamically-resized array that stores the elements of $S_i$ (both the keys and values) contiguously. The dynamic resizing of the $\{B_i\}$s can be implemented (and deamortized) to take $O(1)$ time per operation while ensuring that, in aggregate, the $B_i$ arrays use space within a factor of $1 + 1 / \polylog(n)$ of optimal \cite{Raman03Succinct} (see, also, \cite{brodnik1999resizable}, for a more detailed treatment of succinct dynamic arrays). 

The skeleton $A_i$ allows for queries to $S_i$ to find the appropriate key/value in $B_i$. In particular, the skeleton is a dictionary that maps the $\Theta(\log |S_i|)$-bit hash $h(x)$ for each key $x \in S_i$ to the index $j \in B_i$ at which $x$ appears.\footnote{As noted by by \cite{Raman03Succinct}, there is a subtle issue that one must be careful about for deletions. Whenever a deletion occurs, a gap is created in some position $j$ of  $B_i$. To remove this gap, one must move the final element $x$ in $B_i$ to position $j$. This means that we must also update the index that is stored for element $x$ in $A_i$ to be $j$. }

The good news is that, within the skeleton $A_i$, both the keys (i.e., hashes $h(x)$) and the values (i.e., indices in $B_i$) are only $O(\log |S_i|)$ bits---this means that $A_i$ can be implemented using any standard linear-space dictionary, without worrying about succinctness. The bad news is that there are two ways in which an insertion into $A_i$ could fail: (1) the hash $h(x)$ of the element being inserted satisfies $h(x) = h(x')$ for some other $x' \in S_i$; or (2) the dictionary used to implement $ A_i $ fails. 

Assuming that $A_i$ is implemented to have a $1 / \poly(|S_i|)$ failure probability (i.e., to be a w.h.p.~dictionary), the probability of any given insertion into $S_i$ failing is $1 / \poly(|S_i|)$. Of course, each $|S_i|$ can be arbitrarily small, which is why the data structure given in \cite{Raman03Succinct} offered constant expected-time operations, rather than a high-probability guarantee.

\subsection{Proof of Theorem \ref{thm:transform}}\label{sec:ourreduction}

By the reduction in the previous subsection (Theorem \ref{thm:RR}), it suffices for us to prove the following proposition:

\begin{proposition}
Let $\epsilon$ be a small positive constant. Suppose that $r(n)$ and $p(n)$ are nondecreasing functions satisfying $r(n) \le O(n)$ and $\exp(-n^{1 - \epsilon}) \le p(n) \le 1 / \polylog(n)$. 
 Given an $(r(n), p(n), n \log n)$-dictionary, one can construct a $(r'(n), p'(n), s'(n))$-solution to the many-sets problem, with parameter $\delta = 2\epsilon$, and with 
$$r'(n) = r(n) + (\log p(n)^{-1}) \cdot (\log \log n)^3,$$
$$p'(n) = p(n / \log \log n),$$
and
$$s'(n) = \frac{n \log n}{\log \log n}.$$
\label{prop:transform}
\end{proposition}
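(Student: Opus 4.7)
I would follow Raman and Rao's decomposition: for each set $S_i$, keep a succinct storage array $B_i$ (holding the key/value pairs contiguously, via deamortized dynamic arrays) together with a skeleton $A_i$ mapping some $O(\log|S_i|)$-bit identifier of each key to its index in $B_i$. The storage arrays are deterministic and consume $(1+o(1))\sum_i |S_i|\gamma_i$ bits, so all the work lies in implementing the skeletons within $\sum_i |S_i|\cdot O(\log|S_i|) + O(n\log n / \log\log n)$ bits overall, using $O(r(n) + (\log p(n)^{-1})(\log\log n)^3)$ random bits, and with per-operation failure probability at most $p(n/\log\log n)$.

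My plan is to split the sets by size. Sets with $|S_i|$ large enough that $p(|S_i|) \le p(n/\log\log n)$ (i.e., with $|S_i| \ge n/\log\log n$, using the monotonicity of $p$) will be handled by a single global instance of the provided $(r(n), p(n), n\log n)$-dictionary, storing keys of the form $(i, h_i(x))$ with values equal to the corresponding $B_i$-index; this uses only $r(n)$ random bits and has per-operation failure $p(n)$. Small sets will be handled by a single shared instance of the budget rotated trie from Section \ref{sec:lowbits}, which uses $\tilde{O}(\log n)$ random bits at baseline. The $O(n\log n/\log\log n)$ slack in $s'(n)$ absorbs the gap between the $O(\log n)$-bit per-entry cost of these shared structures and the $O(\log|S_i|)$-bit per-entry budget required by the many-sets problem; the size threshold separating small from large is chosen precisely so that this gap fits.

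The key technical step is boosting the budget rotated trie's failure probability from $1/\poly(n)$ down to $p(n/\log\log n)$. The trie is built from two hash-function gadgets: a $k$-independent function $g$ producing the $a_s$'s, and a gradually-increasing-independence function $h$ producing the $b_s$'s. I would take $k = \Theta(\log p(n)^{-1}/\log n)$ so that Lemma \ref{lem:alpha} gives failure $p(n)$ at cost $O(\log p(n)^{-1})$ random bits, and I would strengthen each of the $\Theta(\log\log n)$ independence-layers of $h$ so that its overall load-balancing guarantee holds with failure $p(n)$ rather than $1/\poly(n)$. The total additive random-bit cost should come out to $O((\log p(n)^{-1})(\log\log n)^3)$, with one $\log\log n$ factor from the number of layers, a second from the per-layer cost scaling in the Reingold construction, and a third from extra padding needed to propagate the boost through the load-balancing tail bounds.

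The main obstacle will be simultaneously achieving the target space, random-bit, and failure-probability bounds in the small-sets case. The per-element cost of the shared budget rotated trie is $O(\log n)$, so if $n_S$ is the total number of elements in small sets, these contribute $O(n_S \log n)$ bits to the skeleton even though their per-set budget is only $\sum_{i \text{ small}} |S_i|O(\log|S_i|)$. Making the difference fit within the $O(n\log n/\log\log n)$ slack restricts where the small/large threshold can be placed, and simultaneously ensuring that the boosted budget rotated trie's overflow and collision analyses carry through at failure probability $p(n)$ requires careful re-examination of the arguments in Sections \ref{sec:rotated}--\ref{sec:lowbits}. Getting the $(\log\log n)^3$ factor right in the random-bit accounting, rather than a weaker polylog, is where the most care will be required.
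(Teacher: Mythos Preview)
Your plan has several genuine gaps that stem from a structural mismatch with what the many-sets problem actually requires.

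First, the size threshold $|S_i| \ge n/\log\log n$ is vacuous: the many-sets problem guarantees $|S_i| \le n^{\delta} = n^{2\epsilon}$, so \emph{every} set is ``small'' in your sense, and the given $(r(n),p(n),n\log n)$-dictionary is never used. This is not a cosmetic issue---the paper uses that dictionary in an essential and different way, as a \emph{backyard} for elements whose insertion into some $A_i$ fails.

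Second, the space accounting does not close. A single shared structure stores keys of the form $(i,h_i(x))$, which are $\Theta(\log n)$ bits regardless of $|S_i|$, so the shared skeleton costs $\Theta(n\log n)$ bits. But the many-sets budget for the skeletons is $\sum_i |S_i|\cdot O(\log|S_i|) + s'(n)$. When all $|S_i| = \polylog n$ (which is consistent with $m \le n/\polylog n$), that budget is only $O(n\log\log n) + O(n\log n/\log\log n)$, a $\log\log n$ factor short. The paper avoids this by implementing each $A_i$ as a \emph{separate} budget rotated trie on $\Theta(\log|S_i|)$-bit keys, so that the per-element skeleton cost really is $O(\log|S_i|)$.

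Third, the ``boosting'' step breaks down at the low end of the allowed $p(n)$ range. When $p(n) = \exp(-n^{1-\epsilon})$, your proposed $k$-independence has $k = \Theta(n^{1-\epsilon}/\log n)$; evaluating such a hash function takes $\Omega(k)$ time, far beyond the $\polylog n$ evaluation budget that the trie's deamortized node-initialization can absorb. Similarly, the gradually-increasing-independence construction of \cite{Reingold1,Reingold2} intrinsically delivers $1/\poly(t)$-type tails; pushing it to $\exp(-n^{1-\epsilon})$ failure with $\tilde O(\log n)$ bits is not something one can obtain by ``padding'' the layers.

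The idea you are missing is amplification \emph{across} skeletons rather than \emph{within} a single hash function. The paper keeps each $A_i$ as its own budget rotated trie with the ordinary $1/\poly(|A_i|)$ failure; any failed insertion is diverted to a backyard implemented with the given $(r(n),p(n),n\log n)$-dictionary. The $A_i$'s in each size category $j=\lfloor\log a_i\rfloor$ are then partitioned into $t_j = \Theta\big((\log\log n)^2 (\log p(n)^{-1})/j\big)$ groups, with each group receiving an \emph{independent} $\Theta(j\log j)$-bit seed. The number of backyard elements contributed by category $j$ is then a sum of $t_j$ independent bounded random variables with tiny mean, and a Chernoff bound over the groups yields failure probability $\poly(p(n))$ that the backyard exceeds $O(n/\log\log n)$. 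The random-bit count $t_j \cdot \Theta(j\log\log n) = O((\log\log n)^3 \log p(n)^{-1})$ falls out directly, and the given dictionary's own $p(n/\log\log n)$ failure on the backyard is what ultimately sets $p'(n)$.
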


We will prove Proposition \ref{prop:transform} by adapting the skeleton/storage-array approach outlined in the previous section. The basic idea will be to implement each $A_i$ as a budget rotated trie, and then to share random bits between $A_i$s in just the right way so that (1) the total number of random bits used a small; but (2) with good probability, only an $O(1 / \log \log n)$ fraction of the elements present at any given moment will have experienced a failure when they were inserted. We then store the elements that experience failures in a backyard data structure implemented using the $(r(n), p(n), O(n \log n))$-dictionary that we are given.

\paragraph{Preliminaries on how to discuss the $A_i$s.}
Since the size of a given skeleton $A_i$ may fluctuate over time, the skeleton $A_i$ will need to be rebuilt every time that its size changes by a constant factor. These rebuilds can be deamortized to take constant time per operation (and since the dictionary $A_i$ is permitted to use $O(|S_i| \log |S_i|)$ bits, it  is acceptable for the rebuilds to add a constant-factor space overhead to $A_i$). 

At any given moment, define the \defn{skeletal size} $a_i$ of $S_i$ to be the size that $S_i$ was the most recent time that $A_i$ was (logically) rebuilt. The skeletal size satisfies $a_i = \Theta(|S_i|)$ at all times, but has the convenient property that it only changes when a rebuild occurs. The skeletal size $a_i$ will also be used to determine how we implement $A_i$ (sets with different skeletal sizes may be implemented differently from one another)---this means that, each time $A_i$ is rebuilt (and $a_i$ changes), the implementation of  $A_i$ may also change. 

\paragraph{Defining a backyard structure.} An important component of the data structure will be a \defn{backyard dictionary} $T$, which is used to store a small number of keys/values space inefficiently. We implement $T$ using the $(r(n), p(n), n\log n)$-dictionary that we are given, and we assume without loss of generality that $|T| \ge n / \log \log n$ at all times (if $|T| < n / \log \log n$, then we can pad it with dummy elements to bring the size up). This means that, at any given moment, $T$ uses at most $O(r(n))$ random bits, has failure probability at most $O(p(n / \log \log n))$, and uses space at most
$$O\left((|T| + n / \log \log n) \cdot \log n\right)$$
bits. In order for $T$ to meet the constraints of Proposition \ref{prop:transform} we will need to establish that, with probability $1 - O(p'(n))$, we have 
\begin{equation}
|T| \le O(n / \log \log n)
\label{eq:Tsmall}
\end{equation}
at any given moment.

It is worth taking a moment to describe precisely what information $T$ stores for each key/value pair $(x, y)$ that it stores belonging to a given skeleton $A_i$: it stores the pair $(i, x)$ as a key and it stores $y$ as a value. Since different $A_i$s have different key/value sizes (all of which are $\Theta(\log n)$), we pad the sizes of the keys/values to all be a fixed length $\Theta(\log n)$ in the backyard.\footnote{As an additional subtlety, whenever an element is stored in the backyard, it should also be stored in the appropriate storage array $B_i$, that way whenever $A_i$ is rebuilt, it can determine which of its elements are currently in the backyard, it can remove those elements from the backyard, and it can include those elements in the rebuild of $A_i$.}

\paragraph{Storing small $A_i$s in the backyard automatically.}
Let $c$ be a sufficiently large positive constant. We handle the $A_i$s satisfying $a_i \le \log^{c} n$ by simply placing \emph{all} of their elements in the backyard $T$. The number of such $A_i$s is trivially at most $m$, which by the definition of the many-sets problem, is at most $n / \polylog n \le O(n / \log^{2c} n)$. The number of elements that these $A_i$s contribute to the backyard is therefore at most $O(\log^{c} n \cdot n / \log^{2c} n) = o(n / \log^c n)$.  Throughout the rest of the section, we will focus exclusively on $A_i$s satisfying $a_i > \log^{c} n$. 


\paragraph{Partitioning the remaining $A_i$s into groups $G_{j, k}$.}
We say that each $A_i$ is in \defn{category} $j = \lfloor \log a_i \rfloor$. Within each category $j$, we partition the $A_i$s (satisfying $a_i > \log^c n$) into 
\begin{equation}
    t_j = \frac{(\log \log n)^2 \log p(n)^{-1} }{j}
    \label{eq:deft}
\end{equation}
groups, $G_{j, 1}, G_{j, 2}, \ldots, G_{j, t_j}$ such that, for each group $k \in [t_j]$,
$$\sum_{A_i \in G_{j, k}} a_i \le O(n / t_j).$$
Note that such a partition is feasible only if and only if we can guarantee that $a_i \le O(n / t_j)$ for each $i$---fortunately, this follows from 
\begin{align*}
a_i \le n^{\delta} & \le O\left(n \cdot \frac{(\log \log n)^2}{j n^{1 - \epsilon}}\right) \text{   \phantom{fffffffffffff}    (since }\epsilon = \delta / 2\text{)} \\ & \le O\left(n \cdot \frac{(\log \log n)^2 \log p(n)^{-1}}{j }\right) \text{ (since }\exp(-n^{1 - \epsilon}) \le p(n) \text{)}\\
&= O(n / t_j).
\end{align*}
For any given category $j$, the only difference between how we implement each of the groups $G_{j, 1}, G_{j, 2}, \ldots, G_{j, t_j}$ is that each group $G_{j, k}$ is implemented using a different sequence $R_{j, k}$ of $\Theta(j \log j)$ random bits. 

\paragraph{Implementing each $A_i$.} We now describe how to implement a given skeleton $A_i \in G_{j, k}$ using the $\Theta(j \log j) = \Theta(\log a_j \log \log a_j) = \Theta(\log |A_j| \log \log |A_j|)$ random bits $R_{j, k}$. Recall that we store the $A_i$s satisfying $a_i \le \log^{c} n$ in the backyard, so we need only focus here on $A_i$s satisfying $a_i > \log^c n$. 

Define a hash function $h_{j, k}$ mapping the elements $x \in A_i$ to $\Theta(\log |A_i|)$-bit string $h(x)$. The hash function is implemented using the family of hash functions given in Lemma \ref{lem:hash} of Appendix \ref{app:hash}, meaning that the hash function uses $\Theta(\log |A_j|)$ random bits, and avoids collisions on $A_i$ with probability $1 - 1 / \poly(|A_i|)$ (note that, since $a_i > \log^{c} n$, the precondition for the lemma is met). The skeleton $A_i$ will map hashes $h_{j, k}(x)$, for $x \in A_i$, to indices in $B_i$.

We implement the dictionary $A_i$ using a budget rotated trie (Theorem \ref{thm:budget})---this uses $\Theta(\log |A_j| \log \log |A_j|)$ random bits and has failure probability $1 / \poly(|A_j|)$ per insertion. We remark that, in addition to making use of the probabilistic guarantees offered by the budget rotated trie, we will also be making use of the especially simple way in which the data structure experiences failures: the only possible failure mode is that one of the bins in the trie overflows. This will allow for us to gracefully handle when the $A_j$s fail: we store the element that experienced failure in a backup data structure, and we allow the $A_j$ that caused the failure to continue as though that insertion never happened. 

In more detail, there are two reasons that an insertion into $A_i$ might fail: (1) the hash $h_{j, k}(x)$ of the element being inserted satisfies $h_{j, k}(x) = h_{j, k}(x')$ for some other $x' \in S_i$; or (2) the budget rotated trie used to implement $ A_i $ fails (i.e., the insertion would cause one of the bins in the trie to overflow). The probability of either of these events occurring at any given insertion is $1 / \poly(|A_i|) = 1 / \poly(2^j)$.  Whenever an insertion into $A_i$ fails, we store the key/value pair in the backyard data structure $T$ instead. This means that our full data structure has two possible failure modes: the case where $T$ itself fails, and the case where $T$ becomes too large, violating \eqref{eq:Tsmall}.

\paragraph{Analyzing the probability of a failure.}
The backyard data structure $T$ has failure probability at most $O(p(n / \log \log n))$, by design. Thus, our task is to bound the probability that \eqref{eq:Tsmall} fails. 

\begin{lemma}
With probability $1 - \poly(p(n))$, the category $j$ contributes at most $O(n / (\log \log n)^2)$ elements to $T$, at any given moment.
\label{lem:smallbackyard}
\end{lemma}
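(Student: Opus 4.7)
The plan is to bound $F := \sum_{k=1}^{t_j} F_{j,k}$, where $F_{j,k}$ denotes the number of elements from group $G_{j,k}$ currently diverted to the backyard $T$, and to show that $F \le O(n/(\log\log n)^2)$ with probability $1 - \poly(p(n))$.

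The first step is to bound the expectation. An insertion of a key $x$ into some $A_i \in G_{j,k}$ is diverted to $T$ only if either $h_{j,k}$ collides $x$ with another element of $A_i$, or the budget rotated trie implementing $A_i$ overflows a fusion bin on this insertion; by Theorem~\ref{thm:budget} (together with Lemma~\ref{lem:hash} in the appendix), each of these events occurs with probability $O(1/\poly(2^j))$ per insertion, and the polynomial degree is a tunable constant. Since $G_{j,k}$ absorbs at most $O(n/t_j)$ insertions over the phase (from the size constraint $\sum_{A_i \in G_{j,k}} a_i \le O(n/t_j)$, combined with a geometric-series bound over rebuilds in the insertion-only regime of Section~\ref{sec:prelim}), linearity gives $\E[F_{j,k}] \le O(n/(t_j \cdot \poly(2^j)))$. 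Summing over $k$, $\E[F] \le O(n/\poly(2^j))$, which is $o(n/(\log\log n)^2)$ because the restriction $a_i > \log^c n$ forces $2^j > \log^c n$.

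The second step is concentration. The variables $\{F_{j,k}\}_k$ are mutually independent because different groups are assigned disjoint blocks $R_{j,k}$ of random bits, and each $F_{j,k}$ satisfies the deterministic upper bound $F_{j,k} \le M := O(n/t_j)$. Applying Bernstein's inequality to $\sum_k F_{j,k}$ with deviation $s = \Theta(n/(\log\log n)^2)$, and using that $\poly(2^j) \gg (\log\log n)^2$ so that the $Ms/3$ term dominates the variance proxy $\sigma^2 \le M \cdot \E[F]$, I obtain a tail of $\exp(-\Omega(s \cdot t_j/n)) = \exp(-\Omega(t_j/(\log\log n)^2)) = \exp(-\Omega(\log p(n)^{-1}/j))$.

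The main obstacle is making this exponent large enough to qualify as $\poly(p(n))$ across the full range $c\log\log n \le j \le \delta \log n$, since naively $\exp(-\log p(n)^{-1}/j)$ is only $p(n)^{1/j}$. I plan a two-regime case split. When $j$ is close to $\log n$, the per-insertion failure bound $1/\poly(2^j) \le 1/\poly(n)$ is already strong enough that $\E[F] \le 1/\poly(n)$, and a direct Markov's inequality beats $\poly(p(n))$. When $j$ is near the lower end $c\log\log n$, I would instead apply Chernoff to the indicator count $N := |\{k : F_{j,k} > 0\}|$: since $\E[N] \le t_j \cdot \E[F_{j,k}] \le O(n/\poly(2^j))$ and the indicators are independent across $k$, a Chernoff bound yields a sharp tail on the event $N > t_j/(\log\log n)^2$, which is exactly the threshold at which $N \cdot M$ exceeds the target $n/(\log\log n)^2$. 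Tuning the polynomial degree in the per-insertion failure probability of Theorem~\ref{thm:budget} (a free constant that only affects hidden constants) ensures the two regimes overlap. A final union bound over the $O(n)$ moments of the phase, absorbed into the polynomial factors in $p(n)$, upgrades the snapshot bound to the ``at any given moment'' form of the lemma.
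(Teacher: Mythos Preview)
Your setup, the expectation bound, and the recognition that Bernstein yields only $\exp(-\Omega(\log p(n)^{-1}/j))$ are all correct. The gap is in the two-regime patch: neither case closes.

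For Case~1 (Markov for ``large'' $j$), recall that $j \le \delta\log n$, so $\poly(2^j) \le n^{O(\delta)}$ and Markov can give at best a $1/\poly(n)$ tail. When $p(n)$ is near the small end of the allowed range, say $p(n) = \exp(-n^{1-\epsilon})$, you need a tail of $\poly(p(n)) = \exp(-\Omega(n^{1-\epsilon}))$, which no Markov bound can ever deliver. For Case~2 (indicator Chernoff for ``small'' $j$), the per-group expectation $\E[F_{j,k}] = \Theta(n/(t_j\,2^{Cj}))$ can be $\gg 1$: take $p(n) = 1/\polylog(n)$ and $j = \Theta(\log\log n)$, so that $t_j = \Theta((\log\log n)^2)$ and $2^{Cj} = \polylog(n)$, whence $\E[F_{j,k}] = \Theta(n/\polylog(n))$. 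Then $\Pr[F_{j,k} > 0]$ is near $1$, $\E[N]$ is near $t_j$, and the event $N \le t_j/(\log\log n)^2$ simply does not hold. No constant choice of the polynomial degree $C$ makes the two regimes meet across the full range $c\log\log n \le j \le \delta\log n$ and $\exp(-n^{1-\epsilon}) \le p(n) \le 1/\polylog(n)$.

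The missing idea is that you do not need a case split at all: apply the \emph{multiplicative} Chernoff bound to the normalized variables $X'_k := F_{j,k}/\Theta(n/t_j) \in [0,1]$ rather than Bernstein to the raw $F_{j,k}$ or Chernoff to the indicators. With $Y = \sum_k X'_k$ and $D := (t_j/(\log\log n)^2)/\E[Y] = \poly(2^j)/(\log\log n)^2 \ge 2^{\Omega(j)}$, the bound $\Pr[Y > D\,\E[Y]] \le D^{-\Omega(D\,\E[Y])}$ carries an extra $\log D = \Omega(j)$ factor in the exponent compared to Bernstein. That factor exactly cancels the $1/j$ in $t_j$, giving $\exp(-\Omega(j)\cdot t_j/(\log\log n)^2) = \exp(-\Omega(\log p(n)^{-1})) = \poly(p(n))$ uniformly in $j$. (Equivalently, use Bennett's inequality instead of Bernstein; Bernstein discards precisely the $\log(Ms/\sigma^2) = \Omega(j)$ factor that you need.) The final union bound over $O(n)$ moments is also unnecessary and in fact problematic for $p(n) = 1/\polylog(n)$; the lemma is a per-moment statement.
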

\begin{proof}
We have already established that the $A_i$s satisfying $a_i \le \log^c n$ deterministically contribute at most $O(n / (\log \log n)^2)$ elements to $T$ (because they deterministically \emph{have} at most $O(n / (\log \log n)^2)$ elements). Thus we focus here on the $A_i$s satisfying $a_i > \log^c n$ (note that these are all in categories $j$ satisfying $j > \Omega(\log \log n)$.  

Now consider a category $j$ satisfying $j = \Omega(\log \log n)$. As shorthand, we will use $t$ to denote $t_j$ and $G_k$ to denote $G_{j, k}$. For each group $G_{k}$, $k \in [t]$, define $X_{k}$ to be the number of elements that skeletons in $G_{k}$ contribute to the backyard. We have that $X_{k} \le O(n / t)$ deterministically; and, since each element $x$ in each $A_i \in G_{k}$ has a $1 / \poly(2^j)$ probability of failure, we have that
\begin{equation}
\E[X_{k}] = \frac{n}{t \poly(2^j)}.
\label{eq:EXk}
\end{equation}
Finally, since each of the groups $G_{1}, G_{2}, \ldots, G_{t}$ use different random-bit sequences, the $X_{k}$s are independent. Defining $X'_{k} = X_{k} / \Theta(n / t)$, the number of elements that $G_{k}$ contributes to the backyard can be expressed as $\Theta(\frac{n}{t} \cdot Y)$, where
$$Y = \sum_{k = 1}^{t} X'_{k}.$$
The $X'_{k}$'s are independent random variables in the range $[0, 1]$ and, by \eqref{eq:EXk},
\begin{equation}
    \E[Y] = \sum_{k = 1}^{t} \E[X'_{k}] = \sum_{k = 1}^t  \Theta(t / n) \E[X_{k}] = t / \poly(2^j).
    \label{eq:EY}
\end{equation}
Applying a Chernoff bound to $Y$, we get
\begin{align*}
\Pr[Y > (1 + D)\E[Y]] \le \left( \frac{e^D}{(1 + D)^{1 + D}}\right)^{\E[Y]},
\end{align*}
which for $D \ge \Omega(1)$ implies
\begin{align*}
\Pr[Y_j > D \E[Y]] \le D^{-\Omega(D \cdot \E[Y])}.
\end{align*}
Set
\begin{align*}
    D &= \frac{t}{(\log \log n)^2 \E[Y]} \\ & = \frac{\poly(2^j)}{(\log \log n)^2 } \text{\phantom{ffffffffffffffff} (by \eqref{eq:EY})} \\ & \ge 2^{\alpha j} \text{\phantom{fffffffffffffff} \phantom{fffffffffff}  (since }j\ge \Omega(\log \log n)\text{)},
\end{align*}
where $\alpha$ is a positive constant of our choice. Then we have
\begin{align*}
\Pr\left[Y_j > \frac{t}{(\log \log n)^2}\right] & \le  D^{-\Omega(D \cdot \E[Y])} \\ 
& = \exp\left(- \alpha j \cdot \Omega(D \cdot \E[Y]) \right) \text{\phantom{fffffffffff} (since } D \ge 2^{\alpha j} \text{)}\\
& = \exp\left(- \alpha j \cdot \Omega(t / (\log \log n)^2) \right) \\
& = \exp\left(- \alpha j \cdot \Omega(\log p(n)^{-1} / j)\right) \text{\phantom{ffff} (by \eqref{eq:deft})}\\
& = \exp\left(- \alpha \cdot \Omega(\log p(n)^{-1})\right) \\
& = \poly(p(n)).
\end{align*}
Thus, we have with probability $1 - \poly(p(n))$ that
$$Y \le \frac{t}{(\log \log n)^2}.$$
The number of elements that category $j$ contributes to the backyard is therefore at most
$$O\left(\frac{n}{t} \cdot Y\right) \le O(n / (\log \log n)^2),$$
as desired.
\end{proof}

Putting the pieces together, the total size of $T$ is $O(n / \log \log n)$ with probability
$$1 - O(p(n / \log \log n)) - (\log \log n) \cdot \poly(p(n)) = 1 - O(p(n / \log \log n)).$$

\paragraph{Bounding the number of random bits.}
Finally, we count the number of random bits used by the data structure. The backyard uses $O(r(n))$ random bits, so it suffices to bound the number of random bits used by the skeletons in each category. Note that different categories can use the same random bits as one another (since we do not require independence between categories), so it suffices to bound the number of random bits used by any given category of skeletons. In category $j$, there are $t_j$ groups, each of which uses $\Theta(j \log j) = O(j \log \log n)$ random bits. The total number of random bits used by the category is therefore
$$O(j (\log \log n) t_j) = O\left(j \log \log n \cdot \frac{(\log \log n)^2 \log p(n)^{-1} }{j} \right) = O\left((\log \log n)^3 \log p(n)^{-1}\right).$$
In total, the number of random bits used by the data structure is
$$O(r(n)) + O\left((\log \log n)^3 \log p(n)^{-1}\right).$$
This completes the proof of Proposition \ref{prop:transform}, and thus also the proof of Theorem \ref{thm:transform}.

\appendix
\section{Appendix: Universe Reduction Using $ O (\log n) $ Random Bits}\label{app:hash}

In this section, we extend the budget rotated trie to support keys from a universe $ U $ of super-polynomial size. Throughout the section, we set $U = [2^u]$ for some $u = n^{o(1)}$, and we assume that machine words are $\Theta(u)$ bits. 

To support large keys, the natural approach is to first hash elements from $U$ to a smaller universe $U'$ of polynomial size, an then to store the $\Theta(\log n)$-bit keys in a hash table along with pointers to the full keys/values. Past work on load-balancing hash functions \cite{Reingold1} has used a pair-wise independent hash function $h: [2^u] \rightarrow [\poly(n)]$ to perform this reduction. This requires the use of $\Theta(u)$ random bits, which when $u$ is large, is significantly larger than $\log n \log \log n$. 

An appealing alternative to using pairwise-independent hash functions would be to instead use Pagh's construction \cite{pagh2009dispersing} (which, in turn, is based on an earlier construction by Fredman, Koml\'os, and Semer\'edi \cite{Fredman82FKS}) of $(1 + o(1))$-universal hash functions that require only $O(\log n + \log \log u)$ random bits. The only minor problem with this construction is that it is not fully explicit. The construction requires access to a random prime number $p \in [\poly(n)]$, but the only known time-efficient high-probability approaches to constructing such a prime number require $\omega(\log n \log \log n)$ random bits (see discussion in \cite{fouque2014close}). 


Fortunately, this issue is relatively straightforward to solve. For completeness, we now give a construction for a simple family of hash functions that can be initialized in time $o(n)$ and used for universe reduction.

\begin{lemma}
Let $n > u^c$ for a sufficiently large positive constant $c$ and let $S \subseteq [2^u]$ be a set of size $n$. Let $\mathcal{P}$ be the set of prime numbers in the range $[n^{2/c}]$. Select $p_1, p_2, \ldots, p_{c^2}$ independently and uniformly at random from $\mathcal{P}$, and define the function $h:[2^u] \rightarrow [n^{2c}]$ by 
$$h(x) = (x \bmod p_1 p_2 \cdots p_{c^2}).$$
With probability $1 - 1 /\poly(n)$, $h$ is injective on $S$.
\label{lem:hash}
\end{lemma}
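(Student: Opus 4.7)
The plan is to union bound over pairs of distinct elements of $S$. For a fixed pair $x \neq y$ in $S$, a collision $h(x) = h(y)$ occurs precisely when $p_1 p_2 \cdots p_{c^2}$ divides $x - y$, and in particular this forces every individual $p_i$ to divide $x - y$. I will bound the probability of this latter event using independence of the $p_i$'s, and then take a union bound over the at most $\binom{n}{2} \le n^2$ pairs.

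To bound $\Pr[p_i \mid (x - y)]$ for a single index $i$, I use two ingredients. First, because $0 < |x - y| < 2^u$, the integer $x - y$ has at most $u$ distinct prime divisors (since any product of $u+1$ distinct primes exceeds $2^u$). Second, by the prime number theorem, $|\mathcal{P}| = \Theta(n^{2/c} / \log n)$ for $n$ sufficiently large. Hence
$$\Pr[p_i \mid (x - y)] \;\le\; \frac{u}{|\mathcal{P}|} \;\le\; O\!\left(\frac{u \log n}{n^{2/c}}\right),$$
and the hypothesis $n > u^c$ (equivalently $u \le n^{1/c}$) upgrades this to $O\!\left(\log n / n^{1/c}\right)$.

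By independence of the $p_i$'s, the probability that all $c^2$ of them simultaneously divide $x - y$ is at most $\bigl(O(\log n / n^{1/c})\bigr)^{c^2} = O\bigl((\log n)^{c^2} \cdot n^{-c}\bigr)$. The union bound over pairs then gives a total failure probability of at most $O\bigl((\log n)^{c^2} \cdot n^{-c+2}\bigr)$, which is $1/\poly(n)$ once $c$ is chosen sufficiently large. I will also separately verify the stated codomain: since each $p_i \le n^{2/c}$, the image of $h$ lies in $[0, p_1 \cdots p_{c^2}) \subseteq [0, n^{2c}) = [n^{2c}]$.

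There is no real obstacle; the only subtlety is not trying to prove the stronger statement that no two $p_i$'s coincide or that $p_1 \cdots p_{c^2} \nmid (x-y)$ directly. Relaxing the collision event to ``every $p_i$ individually divides $x-y$'' is what makes the bound clean, since it turns a statement about divisibility by a product into $c^2$ independent divisibility events, each governed by the simple ``few prime factors of a bounded integer'' estimate combined with prime-number-theorem density.
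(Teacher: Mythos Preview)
Your proof is correct and matches the paper's argument essentially step for step: union bound over pairs, relax the collision event to ``each $p_i$ divides $x-y$,'' bound the single-prime divisibility probability by $u/|\mathcal{P}|$ using that $|x-y|<2^u$ has at most $u$ prime factors, invoke the prime number theorem for $|\mathcal{P}|$, and raise to the $c^2$ power by independence. Your explicit mention of the relaxation subtlety and the codomain check are nice touches that the paper leaves implicit.
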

\begin{proof}
The probability that $|h(S)| \neq S$ satisfies
\begin{align*}
\Pr[|h(S)| \neq S] & \le \sum_{s_1, s_2 \in S} \Pr[|s_1 - s_2| \text{ divisible by all of }p_1,p_2, \ldots, p_{c^2}],\\
\end{align*}
where $s_1$ and $s_2$ are implicitly taken to be distinct. Since the $p_i$s are independent, this is
\begin{align*}
& \sum_{s_1, s_2 \in S} \left(\Pr[|s_1 - s_2| \text{ divisible by }p_1]\right)^{c^2}. 
\end{align*}
The quantity $|s_1 - s_2|$ is an element of $U = [2^u]$, and can thus have at most $u$ distinct prime factors. Therefore,
\begin{align*}
\Pr[|h(S)| \neq S] & \le \sum_{s_1, s_2 \in S} \left(\frac{u}{|\mathcal{P}|}\right)^{c^2}\\
& \le \sum_{s_1, s_2 \in S} \left(\frac{n^{1/c}}{|\mathcal{P}|}\right)^{c^2}.\\
\end{align*}
By the Prime Number Theorem, the set $\mathcal{P}$ of primes in the range $[n^{2/c}]$ has size $\Omega(n^{2/c} / \log n)$. Therefore,
\begin{align*}
\Pr[|h(S)| \neq S] & \le \sum_{s_1, s_2 \in S} O\left(\frac{n^{1/c}}{n^{2/c} / \log n}\right)^{c^2}\\
& \le O\left(\sum_{s_1, s_2 \in S} \left(\frac{\log n}{n^{1/c}}\right)^{c^2}\right)\\
& \le O\left(\sum_{s_1, s_2 \in S} \frac{\log^{c^2} n}{n^{c}}\right) \\
& \le O\left(\frac{n^2 \log^{c^2} n}{n^{c}}\right) \\
& \le 1 / \poly(n).
\end{align*}
\end{proof}

Since all of the prime numbers in $[n^{\epsilon}]$ can be enumerated in time $O(n^{2\epsilon})$, we get the following corollary:

\begin{corollary}
Let $u = n^{o(1)}$. For any constant $\delta > 0$, there exists an explicit family $\mathcal{H}$ of constant-time hash functions $h:[2^u] \rightarrow [\poly(n)]$ such that (a) a random function $h \in \mathcal{H}$ can be constructed in time $O(n^{\delta})$ using $O(\log n)$ random bits; and (b) for any fixed set $S \subseteq U$ of size $n$, and for a random $h \in \mathcal{H}$, we have that $|h(U)| = |U|$ with probability $1 - 1 / \poly(n)$. 
\label{cor:hash}
\end{corollary}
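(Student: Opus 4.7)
My plan is to instantiate Lemma \ref{lem:hash} with an appropriately chosen constant $c = c(\delta)$ and then verify that the resulting family $\mathcal{H}$ meets the three requirements of the corollary. Concretely, I would choose $c$ to be a sufficiently large constant satisfying both the hypothesis of Lemma \ref{lem:hash} and the additional bound $2/c < \delta$. Because $u = n^{o(1)}$, the condition $n > u^c$ demanded by Lemma \ref{lem:hash} holds automatically for all sufficiently large $n$, so the lemma is applicable throughout.

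For the initialization, I would use the Sieve of Eratosthenes to enumerate the set $\mathcal{P}$ of primes in $[n^{2/c}]$ in time $O(n^{2/c} \log \log n) = O(n^{\delta})$, storing the list as part of the data structure. By the Prime Number Theorem, $|\mathcal{P}| = \Theta(n^{2/c} / \log n)$, so each prime can be specified in $O(\log n)$ bits; drawing the $c^2 = O(1)$ independent random primes $p_1, \ldots, p_{c^2}$ therefore uses $O(\log n)$ random bits in total. I would also precompute and cache the product $P = p_1 p_2 \cdots p_{c^2}$, which has at most $c^2 \cdot \log(n^{2/c}) = 2c \log n = O(\log n)$ bits.

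For evaluation, the crux is that $P$ has $O(\log n)$ bits and so fits in a single machine word (of size $\Theta(u) \ge \Omega(\log n)$). Hence $h(x) = x \bmod P$ is a single machine-word modulo operation computed in $O(1)$ time, and its output lies in $[P] \subseteq [n^{2c}] \subseteq [\poly(n)]$, as required. Property (b), injectivity on $S$ with probability $1 - 1/\poly(n)$, is then exactly the conclusion of Lemma \ref{lem:hash}. The only step requiring any real care is this constant-time evaluation claim: one must check that $P$ genuinely fits in $O(1)$ machine words so that modular reduction is an $O(1)$-cost primitive in the word-RAM model. Beyond that, the corollary is a direct unpacking of Lemma \ref{lem:hash}, since all the probabilistic content has already been carried out there.
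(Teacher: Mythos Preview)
Your proposal is correct and follows essentially the same approach as the paper: the paper's proof is the single sentence ``Since all of the prime numbers in $[n^{\epsilon}]$ can be enumerated in time $O(n^{2\epsilon})$, we get the following corollary,'' and you have simply unpacked this by choosing $c$ with $2/c < \delta$, sieving $[n^{2/c}]$, and invoking Lemma~\ref{lem:hash}. Your additional care in verifying that $P$ has $O(\log n) \le O(u)$ bits (so that $x \bmod P$ is a genuine $O(1)$-time word operation) fills in a detail the paper leaves implicit.
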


We can use Corollary \ref{cor:hash} to construct a version of the budget rotated trie that supports large universes.

\begin{theorem}
Let $u = n^{o(1)}$, suppose that keys/values are $u$ bits, and assume a machine word of size at least $\Omega(u)$ bits. The budget rotated trie uses $O(\log n \log \log n)$ random bits, it uses $O(n u)$ bits of space, and it supports insert/delete/query operations on up to $n$ keys/values at a time. The data structure can be initialized in time $O(n^{\epsilon})$, for a positive constant $\epsilon$ of our choice, and each insert/delete/query operation takes constant time with probability $1 - 1 / \poly(n)$.
\label{thm:budget2}
\end{theorem}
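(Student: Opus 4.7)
The plan is to use a classical universe-reduction approach: hash the $u$-bit keys down to $O(\log n)$-bit fingerprints using the family from Corollary \ref{cor:hash}, store the resulting fingerprints (paired with pointers) inside a standard budget rotated trie (Theorem \ref{thm:budget}), and place the full $u$-bit key/value pairs in an auxiliary storage array. The budget rotated trie already achieves everything we want at the $O(\log n)$-bit scale; the only extra work is to verify that the $O(\log n)$-bit hash does not create too many collisions and that the auxiliary array does not blow up the space or randomness budget.

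First I would instantiate a hash function $h : [2^u] \to [\poly(n)]$ using Corollary \ref{cor:hash}. This costs $O(\log n)$ random bits and $O(n^{\delta})$ initialization time for any positive constant $\delta$ of our choice; in particular, we can take $\delta < \epsilon$ so that the total initialization time is $O(n^\epsilon)$. Since the sequence of operations is determined by an oblivious adversary, the set $S \subseteq [2^u]$ of distinct keys that appear during the $O(n)$-operation phase (recall the phase structure from Section \ref{sec:prelim}) is fixed independently of the random bits used by $h$, so by Corollary \ref{cor:hash}, $h$ is injective on $S$ with probability $1 - 1/\poly(n)$. Conditioned on this event, no operation ever encounters a hash collision.

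Next I would maintain a storage array $B$ of $n$ slots, each holding a $u$-bit key paired with its $u$-bit value; this uses $O(nu)$ bits and can be given constant-time initialization by the standard zero-initialization trick cited in Section \ref{sec:rotated}. The budget rotated trie from Theorem \ref{thm:budget} is then used as a dictionary mapping $O(\log n)$-bit fingerprints $h(x)$ to indices in $B$. Insertions append the full key/value pair to $B$ and insert the pair $(h(x), \text{index})$ into the trie. Queries compute $h(x)$, look up the alleged index in the trie, and verify by comparing the full $u$-bit key stored at that index in $B$ (rejecting on mismatch, which cannot happen under the no-collision event but would do no harm anyway). Deletions use the standard succinct-array trick of swapping the deleted slot with the last occupied slot of $B$ and updating the single trie entry whose index was rewritten, all in $O(1)$ time.

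Finally I would tally the resources. The random bits come from the hash function ($O(\log n)$) plus the trie ($O(\log n \log \log n)$), for a total of $O(\log n \log \log n)$. The space is $O(nu)$ for $B$ plus $O(n \log n) = O(nu)$ for the trie (using $\Theta(\log n)$-bit entries), so $O(nu)$ overall. A union bound over the $1/\poly(n)$ collision probability of $h$ and the $1/\poly(n)$ per-operation failure probability of the trie yields a $1 - 1/\poly(n)$ success guarantee per operation. I do not expect a serious obstacle: the one subtlety to be careful about is that ``injectivity of $h$ on the set of keys ever touched during the phase'' must hold simultaneously for all operations in the phase, but since the adversary is oblivious and the phase has only $O(n)$ operations, Corollary \ref{cor:hash} applies to this fixed set $S$ in one shot and delivers the required $1 - 1/\poly(n)$ bound.
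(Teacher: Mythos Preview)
Your proposal is correct and matches the paper's approach: the paper itself does not spell out a proof of Theorem~\ref{thm:budget2} but simply states ``We can use Corollary~\ref{cor:hash} to construct a version of the budget rotated trie that supports large universes,'' having already described the universe-reduction template (hash $u$-bit keys to $\Theta(\log n)$-bit fingerprints, store fingerprints plus pointers in the trie, keep full keys/values elsewhere). Your resource accounting and the observation that the oblivious adversary fixes $S$ independently of $h$ are exactly what is needed; the only cosmetic difference is that you handle deletions via swap-with-last in $B$, whereas the paper's Section~\ref{sec:prelim} convention would simply mark keys as deleted and defer cleanup to the next phase rebuild---either works.
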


To eliminate the $O(n^\epsilon)$ initialization cost, we can also construct a dynamic version of the same data structure, where there is some upper bound $N$ on the data structure's size, but where the true size $n$ changes over time. Every time that the data structure's size changes by a constant factor, we rebuild it based on the new value of $ n $. Each rebuild takes time $O(n)$ (with high probability in $n$), but the cost of a rebuild can be spread across $\Theta(n)$ operations. The properties of this new data structure can be summarized with the following corollary.

\begin{corollary}
Let $u = N^{o(1)}$, suppose that keys/values are $u$ bits, and assume a machine word of size at least $\Omega(u)$ bits. The dynamic budget rotated trie uses $O(\log N \log \log N)$ random bits and supports insert/delete/query operations on up to $N$ keys/values at a time. If it is storing $n$ key/value pairs, then it uses $O(n u)$ bits of space, and each insert/delete/query operation takes constant time with probability $1 - 1 / \poly(n)$.
\end{corollary}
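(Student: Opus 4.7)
The plan is to wrap the static budget rotated trie of Theorem~\ref{thm:budget2} in a standard dynamic-resizing wrapper, so that the $O(n^{\epsilon})$ initialization cost is amortized away. I would maintain a single active instance whose capacity is $\Theta(n)$, where $n$ is the current number of stored key/value pairs. Whenever $n$ crosses a doubling/halving threshold, a rebuild is triggered: a fresh static instance of capacity $\Theta(n)$ is allocated, its random bits are drawn as a fresh block from the input stream, and the elements of the old instance are migrated into the new one. The next threshold can be reached only after $\Theta(n)$ further operations, and the total work of a rebuild is $O(n^{\epsilon}) + O(n) = O(n)$, so a standard incremental schedule lets us perform $O(1)$ units of rebuild work per operation and guarantees that the rebuild finishes before the threshold is crossed.

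During a rebuild each operation consults both the old and the new instance, so it touches $O(1)$ static instances in total. By Theorem~\ref{thm:budget2} each such instance answers in constant time with probability $1 - 1/\poly(n)$, so a union bound preserves the $1 - 1/\poly(n)$ per-operation guarantee. For randomness and space accounting, at any moment there are at most two active instances, each of size $n' = \Theta(n) \leq N$. Each one uses $O(\log n' \log \log n') \leq O(\log N \log \log N)$ random bits (this dominates the $O(\log N)$ bits spent on the universe-reduction hash function of Corollary~\ref{cor:hash}, drawn once per rebuild), and $O(n' u) = O(n u)$ bits of storage, giving the claimed bounds.

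The main obstacle is that Theorem~\ref{thm:budget2} (and the universe reduction of Lemma~\ref{lem:hash} on which it relies) requires $u = n^{o(1)}$, whereas the corollary's hypothesis is only $u = N^{o(1)}$; when the current $n$ is much smaller than $N$, the condition $n > u^{c}$ from Lemma~\ref{lem:hash} could fail. I would handle this by treating all $n$ below a threshold $n_0 = N^{1/2}$ as a separate regime: since $u = N^{o(1)}$ implies $u^{c} = N^{o(1)} \ll n_0$ for any constant $c$, one can simply pad the dictionary with $n_0 - n$ dummy elements and run the large-$n$ construction with effective size $n_0$. The extra space is $O(n_0 u) = O(N^{1/2+o(1)}) = o(N)$ bits, which is absorbed by the $O(nu)$ bound whenever $n$ itself is not tiny; when $n$ is tiny (say $n \leq \polylog N$) we instead fall back to a constant-depth deterministic fusion-node structure, which trivially gives $O(1)$ operations in $O(n u)$ space and requires no randomness. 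The transitions between these regimes are handled by the same deamortized rebuild mechanism.
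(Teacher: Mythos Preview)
Your deamortized-rebuild wrapper is exactly the construction the paper sketches in the paragraph preceding the corollary, and your time and space accounting are fine. The genuine gap is in the randomness budget.

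You have each rebuilt instance draw ``a fresh block from the input stream,'' and then argue that at any moment only two instances are alive, each consuming $O(\log N\log\log N)$ bits. But the corollary bounds the \emph{total} number of random bits ever read, not the number currently in use. Over an unbounded operation sequence there are unboundedly many rebuilds, so your scheme reads unboundedly many random bits and does not establish the stated $O(\log N\log\log N)$ bound. The paper's convention (the footnote in Section~\ref{sec:prelim}) is explicit on this point: rebuilds do \emph{not} sample fresh randomness; the $O(\log N\log\log N)$ bits are fixed once and every rebuilt instance reuses them. This is sound because the adversary is oblivious, so the set stored at any moment is determined independently of the random bits, and the per-instance analysis of Theorem~\ref{thm:budget2} applies to each rebuild unchanged.

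On your ``main obstacle'' (the mismatch between $u = N^{o(1)}$ and the hypothesis $u = n^{o(1)}$ of Theorem~\ref{thm:budget2}): you are right that the paper's one-paragraph sketch glosses over this, but your three-regime patch has a hole. In the middle range $\polylog N < n < N^{1/2}$, padding up to size $N^{1/2}$ costs $\Theta(N^{1/2}u)$ bits, which is not $O(nu)$. A cleaner cutover is at $n_0 = u^{c'}$ for a large enough constant $c'$: for $n \ge n_0$ the precondition $n > u^{c}$ of Lemma~\ref{lem:hash} holds and Theorem~\ref{thm:budget2} applies directly; for $n < n_0$, a constant-depth fusion tree (depth $O(\log_w n) = O(1)$ since $w = \Omega(u)$ and $n \le u^{O(1)}$) handles everything deterministically in $O(nu)$ space.
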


\section*{Acknowledgements.} The author is grateful to Martin Dietzfelbinger for helpful pointers to prior work, and would also like to thank Michael A. Bender and Mart\'in Farach-Colton for a number of insightful technical discussions.

\bibliographystyle{plainurl} \bibliography{writeup}

\end{document}